\numberwithin{equation}{section}
\numberwithin{figure}{section}
  \theoremstyle{definition}
  \newtheorem{problem}{\protect\problemname}
  \theoremstyle{plain}
  \newtheorem{fact}{\protect\factname}
  \theoremstyle{plain}
  \newtheorem*{thm*}{\protect\theoremname}
  \theoremstyle{plain}
  \newtheorem{prop}{\protect\propositionname}
  \theoremstyle{remark}
  \newtheorem*{rem*}{\protect\remarkname}
  \theoremstyle{definition}
  \newtheorem{defn}{\protect\definitionname}
\theoremstyle{plain}
\newtheorem{thm}{\protect\theoremname}
   \renewcommand{\subsection}{\@startsection{subsection}{1}{0mm}
   {\baselineskip}%
   {0.5\baselineskip}{\normalfont\normalsize\bfseries}}%
   \renewcommand{\subsubsection}{\@startsection{subsubsection}{1}{0mm}
   {\baselineskip}%
   {0.5\baselineskip}{\normalfont\normalsize\itshape}}%
  \providecommand{\definitionname}{Definition}
  \providecommand{\factname}{Fact}
  \providecommand{\problemname}{Problem}
  \providecommand{\propositionname}{Proposition}
  \providecommand{\remarkname}{Remark}
  \providecommand{\theoremname}{Theorem}
\providecommand{\theoremname}{Theorem}
\begin{document}
\title{An Algorithm for the T-count}
\author{David Gosset$^{\dagger}$}
\author{Vadym Kliuchnikov$^{\star}$}
\author{Michele Mosca$^{\dagger,\triangle}$}
\author{Vincent Russo$^{\star}$}
\address{$^\dagger$ Institute for Quantum Computing and Department of Combinatorics \& Optimization, University of Waterloo} \address{$^\star$ Institute for Quantum Computing and David R. Cheriton School of Computer Science, University of Waterloo}
\address{$^\triangle$ Perimeter Institute for Theoretical Physics, Waterloo, ON}
\begin{abstract}
We consider quantum circuits composed of Clifford and $T$ gates. In this context the $T$ gate has a special status since it confers universal computation when added to the (classically simulable) Clifford gates. However it can be very expensive to implement fault-tolerantly. We therefore view this gate as a resource which should be used only when necessary. Given an $n$-qubit unitary $U$ we are interested in computing a circuit that implements it using the minimum possible number of $T$ gates (called the $T$-count of $U$). A related task is to decide if the $T$-count of $U$ is less than or equal to $m$; we consider this problem as a function of $N=2^n$ and $m$. We provide a classical algorithm which solves it using time and space both upper bounded as $\mathcal{O}(N^m \text{poly}(m,N))$. We implemented our algorithm and used it to show that any Clifford+T circuit for the Toffoli or the Fredkin gate requires at least 7 $T$ gates. This implies that the known 7 $T$ gate circuits for these gates are $T$-optimal. We also provide a simple expression for the $T$-count of single-qubit unitaries.
\end{abstract}
\maketitle

\section{Introduction}

The single-qubit $T$ gate 
\[
T=\left(\begin{array}{cc}
1 & 0\\
0 & e^{i\frac{\pi}{4}}
\end{array}\right),
\]
along with all gates from the Clifford group, is a universal gate
set for quantum computation. The $T$ gate is essential because circuits
composed of only Clifford gates are classically simulable \cite{Gottesman_Heisenberg}.
The $T$ gate also plays a special role in fault-tolerant quantum
computation. In contrast with Clifford gates, the $T$ gate is not
transversal for many quantum error-correcting codes, which means that
in practice it is very costly to implement fault-tolerantly. For this
reason we are interested in circuits which use as few $T$ gates as
possible to implement a given unitary. 

We consider unitaries which can be implemented exactly using Clifford
and T gates. The set of all such unitaries is known: it was conjectured
in \cite{kliuchnikov2012fast} and later proven in \cite{giles2013exact}
that an $n$-qubit unitary can be implemented using this gate set
if and only if its matrix elements are in the ring $\mathbb{Z}\left[i,\frac{1}{\sqrt{2}}\right]$.
In general this implementation requires one ancilla qubit prepared
in the state $|0\rangle$ in addition to the $n$ qubits on which
the computation is to be performed. An algorithm for exactly synthesizing
unitaries over the gate set $\{H,T,\text{CNOT}\}$ was given in reference
\cite{giles2013exact}, and a superexponentially faster version of
this algorithm was presented in reference \cite{kliuchnikov2013synthesis}. 

In this paper we focus on the set of unitaries implementable without
ancillas, that is to say, the group $\mathcal{J}_{n}$ generated by
the $n$-qubit Clifford group and the T gate. An $n$-qubit unitary
is an element of this group if and only if its matrix elements are
in the ring $\mathbb{Z}\left[i,\frac{1}{\sqrt{2}}\right]$ and its
determinant satisfies a simple condition \cite{giles2013exact} (when
$n\geq4$ the condition is that the determinant is equal to $1$).
Notable examples include the Toffoli and Fredkin gates which are in
the group $\mathcal{J}_{3}$. 

For $U\in\mathcal{J}_{n}$, the $T$-count of $U$ is defined to be
the minimum number of $T$ gates in a Clifford+T circuit that implements
it (up to a possible global phase), and is denoted by $\mathcal{T}(U)$.
In other words $\mathcal{T}(U)$ is the minimum $m$ for which 
\begin{equation}
e^{i\phi}U=C_{m}T_{(q_{m})}C_{m-1}T_{(q_{m-1})}\ldots T_{(q_{1})}C_{0}\label{eq:clifford_T_expansion}
\end{equation}
where $\phi\in[0,2\pi)$, $C_{i}$ are in the $n$-qubit Clifford
group, $q_{j}\in\{1,\ldots,n\}$, and $T_{(r)}$ indicates the $T$
gate acting on the $r$th qubit. 

Note that it may be possible to implement some unitary using less
than $\mathcal{T}(U)$ $T$ gates if one uses ancilla qubits and/or
measurements with classically controlled operations. For example,
Jones has shown how to perform a Toffoli gate using these additional
ingredients and only four $T$ gates \cite{Jones2013}. This does
not contradict our result that $\mathcal{T}(\text{Toffoli)=7}$.

We are interested in the following problem. Given $U\in\mathcal{J}_{n}$,
compute a $T$-optimal $n$-qubit quantum circuit for it, that is
to say, a circuit which implements it (up to a global phase) using
$\mathcal{T}(U)$ $T$ gates. A related, potentially easier problem,
is to compute $\mathcal{T}(U)$ given $U$. It turns out that this
latter problem is not much easier: we show in Section \ref{sec:Unitary-Decomposition}
that an algorithm which computes $\mathcal{T}(U)$ can be converted
into an algorithm which outputs a T-optimal circuit for $U$, with
overhead polynomial in $\mathcal{T}(U)$ and the dimension of $U$.
For this reason, and for the sake of simplicity, we focus on the task
of computing $\mathcal{T}(U)$.
\begin{problem}
[\textbf{COUNT-T}] \label{COUNT-T}Given $U\in\mathcal{J}_{n}$ and
$m\in\mathbb{N}$, decide if $\mathcal{T}(U)\leq m$.
\end{problem}
We consider the complexity of this problem as a function of $m$ and
$N=2^{n}$. We treat arithmetic operations on the entries of $U$
at unit cost, and we do not account for the bit-complexity associated
with specifying or manipulating them. We present an algorithm which
solves COUNT-T using time and space both upper bounded as $\mathcal{O}\left(N^{m}\text{poly}(m,N)\right)$.
Our algorithm uses the meet-in-the-middle idea from \cite{amy2013meet}
along with a representation for Clifford+T unitaries where the $T$
gates have a special role. We implemented our algorithm in C++ and
used it to prove that the $T$-count of Toffoli and Fredkin gates
is $7$. We also provide a simple expression for the $T$-count of
single-qubit unitaries. 

The problem of computing $T$-optimal circuits was studied in references
\cite{kliuchnikov2012fast,amy2013meet,matroid}. In reference \cite{kliuchnikov2012fast}
an algorithm was given for synthesizing single-qubit circuits over
the gate set $\{H,T\}$, and it was shown that the resulting circuits
are $T$-optimal. The problem of reducing the number of $T$-gates
in circuits with $n>1$ qubits was considered as an application of
the meet-in-the-middle algorithm from reference \cite{amy2013meet},
where some small examples of $T$-depth optimal circuits were found.
The algorithm for optimizing $T$-depth presented in reference \cite{amy2013meet}
can be used (with a small modification) to solve COUNT-T but its time
complexity is $\Omega\left(2^{n^{2}\left\lfloor m/2\right\rfloor }\right)$
since the size of the $n$-qubit Clifford group is $\Omega\left(2^{n^{2}}\right)$\cite{Calderbank}.
It was conjectured in reference \cite{amy2013meet} that Toffoli requires
7 $T$ gates; we prove this conjecture in this paper. In reference
\cite{matroid} an algorithm based on matroid partitioning is given
which can be used as a heuristic for minimizing the $T$-count and
$T$-depth of quantum circuits. The algorithm we present here solves
COUNT-T using time $\mathcal{O}\left(2^{nm}\text{poly}\left(m,2^{n}\right)\right)$,
which is a superexponential speedup (as a function of $n$) over reference
\cite{amy2013meet}, and in contrast with the heuristic from \cite{matroid}
it computes the $T$-count exactly. 

Our work, and previous work on the $T$-count, takes a different but
complementary perspective from that of the recent paper \cite{Veitch2013}.
That paper considers quantum states (e.g., magic states) as a resource
for computation using Clifford circuits, and attempts to quantify
the amount of resource in a given state. Here we view the $T$ gate
as a resource and ask how much is necessary to implement a given unitary.

The rest of the paper is structured as follows. We begin in Section
\ref{sec:Preliminaries} by discussing the central objects that we
study: Pauli and Clifford operators, and the group $\mathcal{J}_{n}$
generated by Clifford and $T$ gates. In Section \ref{sec:Unitary-Decomposition}
we give a decomposition for unitaries from the group $\mathcal{J}_{n}$.
Using this decomposition we show how an algorithm for the $T$-count
can be used to generate $T$-optimal circuits. In Section \ref{sec:-count-for-single-qubit}
we give a simple characterization of the $T$-count for single-qubit
unitaries, and in Section \ref{sec:Algorithm-for-the} we present
our algorithm for COUNT-T. We conclude in Section \ref{sec:Conclusions-and-open}
with a discussion of open problems.

\section{Preliminaries\label{sec:Preliminaries}}

In this Section we establish notation and we review facts about the
Clifford+T gate set. Throughout this paper we write $N=2^{n}$ and
$[K]=\{1,\ldots,K\}$. We write 
\[
X=\left(\begin{array}{cc}
0 & 1\\
1 & 0
\end{array}\right)\qquad Y=\left(\begin{array}{cc}
0 & -i\\
i & 0
\end{array}\right)\qquad Z=\left(\begin{array}{cc}
1 & 0\\
0 & -1
\end{array}\right)
\]
 for the single-qubit Pauli matrices. We use a parenthesized subscript
to indicate qubits on which an operator acts, e.g., $X_{(1)}=X\otimes\mathbb{I}^{\otimes\left(n-1\right)}$
indicates the Pauli $X$ matrix acting on the first qubit.

\subsection{Cliffords and Paulis}

The single-qubit Clifford group $\mathcal{C}_{1}$ is generated by
the Hadamard and phase gates
\[
\mathcal{C}_{1}=\left\langle H,T^{2}\right\rangle 
\]
 where
\[
H=\frac{1}{\sqrt{2}}\left(\begin{array}{cc}
1 & 1\\
1 & -1
\end{array}\right)\quad T^{2}=\left(\begin{array}{cc}
1 & 0\\
0 & i
\end{array}\right).
\]
When $n>1$, the $n$-qubit Clifford group $\mathcal{C}_{n}$ is generated
by these two gates (acting on any of the $n$ qubits) along with the
two-qubit $\text{CNOT}=|0\rangle\langle0|\otimes\mathbb{I}+|1\rangle\langle1|\otimes X$
gate (acting on any pair of qubits). The Clifford group is special
because of its relationship to the set of $n$-qubit Pauli operators
\[
\mathcal{P}_{n}=\left\{ Q_{1}\otimes Q_{2}\otimes\ldots\otimes Q_{n}:\; Q_{i}\in\{\mathbb{I},X,Y,Z\}\right\} .
\]

Cliffords map Paulis to Paulis, up to a possible phase of $-1$, i.e.,
for any $P\in\mathcal{P}_{n}$ and any $C\in\mathcal{C}_{n}$ we have
\[
CPC^{\dagger}=(-1)^{b}P^{\prime}
\]
 for some $b\in\{0,1\}$ and $P^{\prime}\in\mathcal{P}_{n}$. It is
easy to prove that, given two Paulis, neither equal to the identity,
it is always possible to (efficiently) find a Clifford which maps
one to the other. For completeness we include a proof in the Appendix.
\begin{fact}
\label{Clif_Fact}For any $P,P^{\prime}\in\mathcal{P}_{n}\setminus\{\mathbb{I}\}$
there exists a Clifford $C\in\mathcal{C}_{n}$ such that $CPC^{\dagger}=P^{\prime}$.
A circuit for $C$ over the gate set $\{H,T^{2},\text{CNOT}\}$ can
be computed efficiently (as a function of $n$).
\end{fact}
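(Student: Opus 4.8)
The plan is to reduce the statement to a single canonical target. It suffices to show that for every $P\in\mathcal{P}_n\setminus\{\mathbb{I}\}$ there is a Clifford $C_P$, presented as an explicit circuit of $\mathcal{O}(n)$ gates over $\{H,T^2,\text{CNOT}\}$, with $C_P P C_P^\dagger = Z_{(1)}$. Granting this, for arbitrary $P,P'\in\mathcal{P}_n\setminus\{\mathbb{I}\}$ one sets $C=C_{P'}^\dagger C_P$; then $C P C^\dagger = C_{P'}^\dagger Z_{(1)} C_{P'} = P'$, and a circuit for $C$ is obtained by concatenating the circuit for $C_P$ with the reverse of the circuit for $C_{P'}$, each gate $g$ replaced by $g^\dagger$ (noting $H^\dagger=H$, $\text{CNOT}^\dagger=\text{CNOT}$, and $(T^2)^\dagger=(T^2)^3$), which is again $\mathcal{O}(n)$ gates computed in $\mathcal{O}(n)$ time.

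To construct $C_P$ I would proceed in three stages, relying only on the standard conjugation rules for the generators. Write $P=Q_1\otimes\cdots\otimes Q_n$ with support $A=\{i:Q_i\neq\mathbb{I}\}\neq\emptyset$. First, on each qubit $i\in A$ conjugate by a single-qubit Clifford sending $Q_i$ to $Z$ (possibly up to a sign): conjugation by $H$ sends $X\mapsto Z$, conjugation by $HS^\dagger$ with $S=T^2$ and $S^\dagger=(T^2)^3$ sends $Y\mapsto Z$, and $Z$ needs nothing. This turns $P$ into $\pm\prod_{i\in A}Z_{(i)}$. Second, shrink the support: while $|A|\ge 2$, pick $a,b\in A$ and conjugate by $\text{CNOT}$ with control $a$ and target $b$; since $Z_{(a)}\mapsto Z_{(a)}$ and $Z_{(b)}\mapsto Z_{(a)}Z_{(b)}$, the factor $Z_{(a)}Z_{(b)}$ becomes $Z_{(b)}$ while every other $Z_{(c)}$ is unchanged, so $a$ leaves the support. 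After at most $n-1$ steps only a single $Z_{(j)}$ remains. Third, conjugate by the $\text{SWAP}$ of qubits $j$ and $1$ (three $\text{CNOT}$s) to reach $\pm Z_{(1)}$.

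With the choice $HS^\dagger$ in the first stage no sign is ever introduced, so $C_P P C_P^\dagger=Z_{(1)}$ exactly; in any case a final conjugation by $X_{(1)}$ or $Z_{(1)}$ (both Cliffords over the generating set, since $X=H(T^2)^2H$ and $Z=(T^2)^2$) corrects a stray sign. Each stage contributes $\mathcal{O}(n)$ gates, and the gates are chosen by simply reading off the tensor factors and the support of $P$, so the whole construction runs in time polynomial in $n$. The only point that needs care is verifying the conjugation identities for $H$, $T^2$, and $\text{CNOT}$ on the Pauli matrices and keeping the bookkeeping straight in the support-shrinking loop; there is no substantive obstacle beyond this routine check.
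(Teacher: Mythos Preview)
Your proposal is correct and follows essentially the same approach as the paper's proof: both reduce to a single canonical Pauli (the paper uses $Z_{(1)}$ as the source, you use it as the target), then build the required Clifford from single-qubit gates to normalize each tensor factor to $Z$ and CNOT/SWAP gates to adjust the support. The only minor difference is that you are slightly more explicit about sign bookkeeping, whereas the paper's chosen single-qubit Cliffords ($H$ and $T^{2}H$) simply never introduce a sign; either way the argument goes through.
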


\subsection{The group $\mathcal{J}_{n}$ generated by Clifford and T gates}

We consider the group $\mathcal{J}_{n}$ generated by the $n$-qubit
Clifford group along with the $T$ gate. For a single qubit 
\[
\mathcal{J}_{1}=\left\langle H,T\right\rangle ,
\]
 and for $n>1$ qubits 
\[
\mathcal{J}_{n}=\left\langle H_{(i)},T_{(i)},\text{CNOT}_{(i,j)}:\; i,j\in[n]\right\rangle 
\]
 (recall the subscript indicates qubits on which the gate acts). It
is not hard to see that $\mathcal{J}_{n}$ is a group, since the Hadamard
and $\text{CNOT}$ gates are their own inverses and $T^{-1}=T^{7}$. 

Noting that $H,T,$ and $\text{CNOT}$ all have matrix elements over
the ring 
\[
\mathbb{Z}\left[i,\frac{1}{\sqrt{2}}\right]=\left\{ \frac{a+bi+c\sqrt{2}+di\sqrt{2}}{\sqrt{2}^{k}}:\; a,b,c,d\in\mathbb{Z},\; k\in\mathbb{N}\right\} ,
\]
we see that any $U\in\mathcal{J}_{n}$ also has matrix elements from
this ring. Giles and Selinger \cite{giles2013exact} proved that this
condition, along with a simple condition on the determinant, exactly
characterizes $\mathcal{J}_{n}$.
\begin{thm*}
[Corollary 2 from \cite{giles2013exact}] 
\[
\mathcal{J}_{n}=\left\{ U\in U(N):\:\text{Each entry of \ensuremath{U}is an element of }\mathbb{Z}\left[i,\frac{1}{\sqrt{2}}\right],\text{ and }\det U=e^{i\frac{\pi}{8}Nr}\text{ for some }r\in[8]\right\} 
\]

where $N=2^{n}.$
\end{thm*}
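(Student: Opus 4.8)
The statement asserts two inclusions, and the plan is to dispatch the easy direction $\mathcal J_n\subseteq\{\cdots\}$ by a direct computation and to spend the real effort on the reverse direction. For the easy direction: each generator $H_{(i)},T_{(i)},\text{CNOT}_{(i,j)}$ has all entries in $\mathbb{Z}\left[i,\frac{1}{\sqrt{2}}\right]$, this set is closed under the ring operations, so every finite product of generators --- hence every $U\in\mathcal J_n$ --- has entries in the ring; and using $\det(A\otimes B)=(\det A)^{\dim B}(\det B)^{\dim A}$ (after conjugating by the permutation that moves the active qubits to the front) one checks $\det H_{(i)}=(-1)^{N/2}=e^{i\frac{\pi}{8}N\cdot 4}$, $\det T_{(i)}=e^{i\frac{\pi}{8}N}$, $\det\text{CNOT}_{(i,j)}=(-1)^{N/4}=e^{i\frac{\pi}{8}N\cdot 2}$. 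Since $\{e^{i\frac{\pi}{8}Nr}:r\in\mathbb Z\}$ is a subgroup of $U(1)$ and $\det$ is multiplicative, every $U\in\mathcal J_n$ has a determinant of the required form, and as this value depends only on $r\bmod 8$ one may take $r\in[8]$. This part I expect to be entirely routine.

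For the reverse direction I would argue by exact synthesis: given $U\in U(N)$ with entries in the ring and $\det U=e^{i\frac{\pi}{8}Nr}$, I would construct a product $W$ of generators so that $WU$ is as simple as possible and then write $U=W^{-1}(WU)\in\mathcal J_n$. The engine is \emph{denominator reduction}, the multiqubit analogue of the single-qubit $\{H,T\}$-synthesis of \cite{kliuchnikov2012fast}: write every entry of $U$ as $z/\sqrt{2}^{\,k}$ with $z$ in the cyclotomic ring $\mathbb Z[e^{i\pi/4}]$ (which contains $i$ and $\sqrt2$, so ``least $k$'' is well defined) and $k$ least for the whole matrix, and show that if $k\ge 1$ then a Clifford followed by a single Hadamard $H_{(i)}$ strictly lowers $k$. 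I would identify the qubit $i$ and the Clifford from the residues of the numerators modulo the prime above $2$, using the identity $\sum_j|z_j|^2=2^k$ on each column together with unitarity to pin down the residue pattern. Iterating drives $k$ to $0$; at that point $\sum_j|z_j|^2=1$ holds in $\mathbb Z[\sqrt 2]$ with every summand totally positive, which forces one summand to be $1$ and the rest $0$, so $U$ has been reduced to a monomial matrix whose nonzero entries are eighth roots of unity.

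The last step is to show that such a monomial matrix $U=PD$ --- with $P$ a permutation of $\mathbb F_2^n$ and $D$ diagonal with entries in $\langle e^{i\pi/4}\rangle$ --- lies in $\mathcal J_n$ under the determinant hypothesis. I would build $D$ out of $T_{(i)}$ and $\text{CNOT}$ gates (and, if needed, Toffoli-type gates, which are themselves in $\mathcal J_n$), the determinant hypothesis being exactly what makes the total phase demand consistent with what $\mathcal J_n$ can produce; and I would realize $P$ via the classical fact that any reversible Boolean function is computable by a network of Toffoli-type gates with borrowed ancillas, those ancillas being available among the $n$ registers once $n$ is large enough, with the parity of $P$ again constrained by the determinant hypothesis (for $n\ge 4$ it forces $\det U=1$, hence $P$ even and $D=\mathbb I$). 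This is the only point at which the determinant condition enters, and it is essential: drop it and the synthesis could stall at, say, $\text{diag}(1,\dots,1,i)$, a ring-unitary that is \emph{not} in $\mathcal J_n$ for $n\ge4$.

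I expect the main obstacle to be the two structural facts underlying these stages: first, that denominator reduction never gets stuck --- i.e.\ that the residue pattern of a ring-unitary of denominator exponent $k\ge1$ always admits a progress-making Clifford-plus-Hadamard move, which is where the multiqubit case genuinely departs from \cite{kliuchnikov2012fast}, since the ``small-valuation'' positions must be paired off compatibly with a single qubit's Hadamard; and second, that no hidden parity obstruction prevents the residual monomial matrix from being synthesized once its determinant is correct. Both of these are carried out carefully in \cite{giles2013exact}; the plan above is the shape of that argument.
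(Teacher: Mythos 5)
This statement is not proved in the paper at all: it is imported verbatim as Corollary~2 of \cite{giles2013exact}, so there is no in-paper argument to compare against. Your easy direction is complete and correct (the determinant values $e^{i\frac{\pi}{8}N\cdot4}$, $e^{i\frac{\pi}{8}N}$, $e^{i\frac{\pi}{8}N\cdot2}$ for $H_{(i)}$, $T_{(i)}$, $\text{CNOT}_{(i,j)}$ check out, and closure of the ring and of the determinant subgroup under products is immediate). Your hard direction is an accurate roadmap of the Giles--Selinger synthesis --- denominator-exponent reduction via residues modulo the prime above $2$, the column identity $\sum_j|z_j|^2=2^k$, termination at a monomial matrix of eighth roots of unity, and the determinant condition as the sole obstruction to ancilla-free synthesis --- but as written it is a plan, not a proof: the two load-bearing lemmas (that the residue pattern of a ring-unitary with $k\ge1$ always admits a progress-making move, and that the residual monomial matrix is synthesizable exactly when its determinant lies in the stated subgroup) are named and then explicitly deferred to \cite{giles2013exact}. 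That is the same move the paper makes, just with more scaffolding visible.

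One concrete error: your parenthetical ``for $n\ge 4$ it forces $\det U=1$, hence $P$ even and $D=\mathbb I$'' is false. The condition $\det(PD)=\operatorname{sgn}(P)\prod_j d_j=1$ constrains only the product of the sign and the diagonal phases, not each factor; for instance $P=\mathbb{I}$ and $D=\operatorname{diag}(e^{i\pi/4},e^{-i\pi/4},1,\dots,1)$ satisfies $\det U=1$ with $D\neq\mathbb{I}$, and such a $D$ (which is indeed in $\mathcal{J}_n$) still has to be synthesized, as does an odd permutation compensated by a $-1$ on the diagonal. The correct statement is only that the single scalar invariant $\operatorname{sgn}(P)\prod_j d_j$ must land in $\{e^{i\frac{\pi}{8}Nr}\}$; the synthesis of the monomial matrix then has to be carried out in general, not just in the trivial case.
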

Note that for $n\geq4$ the condition on the determinant is simply
$\det U=1$.

\subsection{Channel representation, and modding out global phases}

Consider the action of an $n$-qubit unitary $U$ on a Pauli $P_{s}\in\mathcal{P}_{n}$
\begin{equation}
UP_{s}U^{\dagger}.\label{eq:U_conjugation}
\end{equation}
The set of all such operators (with $P_{s}\in\mathcal{P}_{n}$) completely
determines $U$ up to a global phase. Since $\mathcal{P}_{n}$ is
a basis for the space of all Hermitian $N\times N$ matrices, we can
expand (\ref{eq:U_conjugation}) as
\begin{equation}
UP_{s}U^{\dagger}=\sum_{P_{r}\in\mathcal{P}_{n}}\widehat{U}_{rs}P_{r},\label{eq:UPU}
\end{equation}
 where 
\begin{equation}
\widehat{U}_{rs}=\frac{1}{2^{n}}\text{Tr}\left(P_{r}UP_{s}U^{\dagger}\right).\label{eq:U_hat_mtx_el}
\end{equation}
 This defines a $N^{2}\times N^{2}$ matrix $\widehat{U}$ with rows
and columns indexed by Paulis $P_{r},P_{s}\in\mathcal{P}_{n}$. We
refer to $\widehat{U}$ as the channel representation of $U$. 

By Hermitian-conjugating (\ref{eq:UPU}) we see that each matrix element
of $\widehat{U}$ is real. It is also straightforward to check, using
(\ref{eq:UPU}), that the channel representation respects matrix multiplication:
\[
\widehat{UV}=\widehat{U}\widehat{V}.
\]
Setting $V=U^{\dagger}$ and using the fact that $\widehat{U^{\dagger}}=\left(\widehat{U}\right)^{\dagger}$,
we see that the channel representation $\widehat{U}$ is unitary.

In this paper we use the channel representation for $U\in\mathcal{J}_{n}$.
In this case, the entries of $U$ are in the ring $\mathbb{Z}\left[i,\frac{1}{\sqrt{2}}\right]$,
and from (\ref{eq:U_hat_mtx_el}) we see that so are the entries of
$\widehat{U}$. Since $\widehat{U}$ is also real, its entries are
from the subring 
\[
\mathbb{Z}\left[\frac{1}{\sqrt{2}}\right]=\left\{ \frac{a+b\sqrt{2}}{\sqrt{2}^{k}}:\; a,b\in\mathbb{Z},\; k\in\mathbb{N}\right\} .
\]

The channel representation identifies unitaries which differ by a
global phase. We write 
\begin{align*}
\widehat{\mathcal{J}_{n}} & =\left\{ \widehat{U}:\; U\in\mathcal{J}_{n}\right\} ,\qquad\widehat{\mathcal{C}_{n}}=\left\{ \widehat{C}:\; C\in\mathcal{C}_{n}\right\} 
\end{align*}
for the groups in which global phases are modded out. Note that each
$Q\in\widehat{\mathcal{C}_{n}}$ is a unitary matrix with one nonzero
entry in each row and each column, equal to $\pm1$ (since Cliffords
map Paulis to Paulis up to a possible phase of $-1$). The converse
also holds: if $W\in\widehat{\mathcal{J}_{n}}$ has this property
then $W\in\widehat{\mathcal{C}_{n}}$. 

Finally, note that since the definition of $T$-count is insensitive
to global phases, it is well-defined in the channel representation;
for $U\in\mathcal{J}_{n}$ we define $\mathcal{T}(\widehat{U})=\mathcal{T}(U)$.

\section{decomposition for unitaries in $\mathcal{J}_{n}$\label{sec:Unitary-Decomposition}}

In this Section we present a decomposition for unitaries over the
Clifford+$T$ gate set and we use it to show that an algorithm for
COUNT-T can be used to generate a $T$-optimal circuit for $U\in\mathcal{J}_{n}$
with overhead polynomial in $\mathcal{T}(U)$ and $N$.

By definition, any $U\in\mathcal{J}_{n}$ can be written as an alternating
product of Cliffords and T gates as in equation (\ref{eq:clifford_T_expansion}).
The number $m$ of $T$ gates in (\ref{eq:clifford_T_expansion})
can be taken to be $\mathcal{T}(U)$, and we can reorganize it so
that each $T$ gate is conjugated by a Clifford: 
\begin{equation}
U=e^{i\phi}\left(\prod_{i=\mathcal{T}(U)}^{1}D_{i}T_{(q_{i})}D_{i}^{\dagger}\right)D_{0}\label{eq:T1_eqn}
\end{equation}
where $D_{i}=\prod_{j=\mathcal{T}(U)}^{i}C_{j}$ for $i\in\{0,1,\ldots,\mathcal{T}(U)\}$.
Note that 
\[
D_{i}T_{(q_{i})}D_{i}^{\dagger}=\frac{1}{2}\left(1+e^{i\frac{\pi}{4}}\right)\mathbb{I}+\frac{1}{2}\left(1-e^{i\frac{\pi}{4}}\right)D_{i}Z_{(q_{i})}D_{i}^{\dagger}.
\]
 In the second term we have a Pauli operator conjugated by a Clifford,
which is equal to another Pauli (up to a possible phase of $-1$).
In other words (\ref{eq:T1_eqn}) can be written
\begin{equation}
U=e^{i\phi}\left(\prod_{i=\mathcal{T}(U)}^{1}R((-1)^{b_{i}}P_{i})\right)D_{0}\label{eq:with_generalbi}
\end{equation}
 where $P_{i}\in\mathcal{P}_{n}\setminus\{\mathbb{I}\}$ , $b_{i}\in\{0,1\}$,
and 
\[
R(\pm P)=e^{i\frac{\pi}{8}\left(\mathbb{I}\mp P\right)}=\frac{1}{2}\left(1+e^{i\frac{\pi}{4}}\right)\mathbb{I}\pm\frac{1}{2}\left(1-e^{i\frac{\pi}{4}}\right)P\qquad P\in\mathcal{P}_{n}\setminus\{\mathbb{I}\}.
\]
The following Proposition shows that such a decomposition exists with
$b_{i}=0$ for all $i\in[\mathcal{T}(U)]$.
\begin{prop}
\label{decomposition}For any $U\in\mathcal{J}_{n}$ there exists
a phase $\phi\in[0,2\pi),$ a Clifford $C_{0}\in\mathcal{C}_{n}$
and Paulis $P_{i}\in\mathcal{P}_{n}\setminus\{\mathbb{I}\}$ for $i\in[\mathcal{T}(U)]$
such that
\begin{equation}
U=e^{i\phi}\left(\prod_{i=\mathcal{T}(U)}^{1}R(P_{i})\right)C_{0}.\label{eq:all_bis_zero}
\end{equation}
\end{prop}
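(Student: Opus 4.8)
The starting point is the decomposition (\ref{eq:with_generalbi}), which already expresses $U$ as $e^{i\phi}\left(\prod_{i=\mathcal{T}(U)}^{1}R((-1)^{b_i}P_i)\right)D_0$ with $m=\mathcal{T}(U)$ terms. The goal is to massage the signs $b_i$ away without increasing the number of $R$-factors. The key algebraic observation is that a sign flip on one Pauli can be absorbed by conjugating with a Clifford: for $P \in \mathcal{P}_n \setminus \{\mathbb{I}\}$ we have $R(-P) = e^{i\pi/4} C R(P) C^{\dagger}$ whenever $C$ is a Clifford with $C P C^{\dagger} = -P$ (such a $C$ exists — one can pick an anticommuting Pauli, which is itself Clifford — so this is where Fact \ref{Clif_Fact} or just the existence of anticommuting Paulis enters). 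More directly, since $R(-P) = e^{i\frac{\pi}{8}(\mathbb{I}+P)}$ and $R(P)^{\dagger} = e^{-i\frac{\pi}{8}(\mathbb{I}-P)}$, one has $R(-P) = e^{i\frac{\pi}{4}} R(P)^{\dagger}$; but $R(P)^{\dagger}$ is not of the form $R(P')$, so the cleaner route is the conjugation identity.

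The plan is therefore to push the Cliffords rightward through the product, one factor at a time, starting from the rightmost $R$-term. Concretely, I would argue by a finite induction: suppose $U = e^{i\phi}\left(\prod_{i=m}^{k+1} R(P_i)\right) \cdot R((-1)^{b_k}P_k) \cdots R((-1)^{b_1}P_1) \cdot D_0$ where the first $m-k$ factors already have positive sign. If $b_k = 0$ we move on; if $b_k = 1$, pick a Clifford $V_k$ with $V_k P_k V_k^{\dagger} = P_k$ up to the needed sign manipulation — more precisely, write $R(-P_k) = e^{i\frac{\pi}{4}} V_k R(P_k) V_k^{\dagger}$ for a suitable Clifford $V_k$ — and substitute. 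The factor $V_k^{\dagger}$ on the right gets merged with the rest of the product $R((-1)^{b_{k-1}}P_{k-1})\cdots D_0$: commuting $V_k^{\dagger}$ to the right past each remaining $R((-1)^{b_j}P_j)$ turns it into $R((-1)^{b_j} V_k^{\dagger} P_j V_k)$ (still a $\pm$ Pauli, since Cliffords preserve the Pauli group up to sign), and eventually $V_k^{\dagger}$ is absorbed into $D_0$. The leftover $V_k$ on the left gets absorbed into the $(k+1)$-th factor or pushed out to the far left and absorbed into... actually the cleanest bookkeeping is: $V_k$ commutes left past the already-fixed positive factors $R(P_i)$, again only changing their Paulis (which stay positive because $V_k$ is applied by conjugation to both sides consistently — here I need to be careful that conjugating $R(P_i) \mapsto V_k R(P_i) V_k^{\dagger} = R(V_k P_i V_k^{\dagger})$ preserves the sign, which it does since the $\pm$ in $R(\pm P)$ tracks the Pauli itself, and $V_k P_i V_k^{\dagger}$ is simply whatever Pauli it is, absorbed with its own natural sign). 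The phase $e^{i\pi/4}$ gets collected into the overall $e^{i\phi}$. After $m$ steps all signs are positive, the $R$-count is unchanged at $\mathcal{T}(U)$, and the accumulated Cliffords and the original $D_0$ combine into a single $C_0 \in \mathcal{C}_n$.

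The main obstacle is purely bookkeeping: ensuring that when a stray Clifford is commuted past an $R$-factor the sign bookkeeping is consistent, i.e., that we never accidentally flip a sign we already fixed. The clean way to handle this is to observe the conjugation identity $C R(\pm P) C^{\dagger} = R(\pm C P C^{\dagger})$ for any Clifford $C$ — note the sign on the right is the \emph{same} $\pm$, because $CPC^{\dagger}$ is by definition the Pauli (possibly itself carrying a $-1$) that $P$ maps to, and the formula $R(\epsilon P) = \frac{1}{2}(1+e^{i\pi/4})\mathbb{I} + \frac{\epsilon}{2}(1-e^{i\pi/4})P$ shows $R$ is a function of the signed Pauli $\epsilon P$. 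So conjugation by a Clifford merely permutes-and-signs the Paulis inside all the $R$-factors without touching the explicit sign bits $b_i$; the only place a $b_i$ actually changes is the single targeted step via $R(-P_k) = e^{i\pi/4} V_k R(P_k) V_k^{\dagger}$. With this framing the induction goes through cleanly and the number of $R$-factors — hence the number of $T$ gates when one expands each $R(P_i)$ back into Clifford$+T$ form — is exactly $\mathcal{T}(U)$, as required.
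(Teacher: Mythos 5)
Your overall strategy---start from (\ref{eq:with_generalbi}) and remove the minus signs one at a time without changing the number of $R$-factors---matches the paper's, but the sign-removal mechanism you chose does not close. There are two problems. First, a small one: your identity $R(-P)=e^{i\pi/4}\,C R(P) C^{\dagger}$ is wrong as stated; if $CPC^{\dagger}=-P$ then $CR(P)C^{\dagger}=Ce^{i\frac{\pi}{8}(\mathbb{I}-P)}C^{\dagger}=e^{i\frac{\pi}{8}(\mathbb{I}+P)}=R(-P)$ exactly, with no extra phase (and no Clifford $C$ makes the version with the $e^{i\pi/4}$ true, as one sees by comparing eigenvalues of the exponents). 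This is harmless. The real gap is in the bookkeeping you flag yourself and then wave away. Your conjugation identity $CR(\epsilon P)C^{\dagger}=R(\epsilon\, CPC^{\dagger})$ is correct, but $CPC^{\dagger}=(-1)^{b}P'$ for some $b\in\{0,1\}$, so a factor $R(P_i)$ with a plus sign becomes $R((-1)^{b}P_i')$, i.e.\ its sign \emph{can} flip. Concretely, if you take $V_k$ to be a Pauli anticommuting with $P_k$ (the natural choice), then commuting $V_k$ past $R(P_i)$ flips the sign of every $P_i$ that anticommutes with $V_k$. So fixing position $k$ can un-fix positions $k+1,\dots,m$ that you already processed, and you have no termination argument. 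On top of that, the residual $V_k$ that you commute to the far left has nowhere to be absorbed: the target form (\ref{eq:all_bis_zero}) has a Clifford only on the right, and pushing $V_k$ back rightward to merge it into $C_0$ re-runs you into the same sign-flipping issue.

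The paper avoids both problems with a one-sided identity: $R(-Q)=e^{i\frac{\pi}{8}(\mathbb{I}+Q)}=R(Q)\,e^{i\frac{\pi}{4}Q}$, together with the observation that $e^{-i\frac{\pi}{4}}e^{i\frac{\pi}{4}Q}$ is a Clifford (conjugate of $e^{-i\pi/4}e^{i\frac{\pi}{4}Z_{(1)}}=(T_{(1)})^{6}$). The correction Clifford then appears only to the \emph{right} of the factor being fixed, so the already-fixed factors to its left are never touched; the entire remainder to the right has $T$-count $j-1$ and is simply re-decomposed from scratch in the form (\ref{eq:with_generalbi}) with $j-1$ terms (its Paulis may change, but they were not yet fixed, so nothing is lost), and the recursion terminates after at most $\mathcal{T}(U)$ rounds. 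If you want to salvage your conjugation-based route you would need either a different choice of $V_k$ guaranteed not to disturb the fixed factors, or a potential-function argument showing the sign flips eventually stop---neither is supplied, and the one-sided identity makes both unnecessary.
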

\begin{proof}
For any $Q\in\mathcal{P}_{n}\setminus\{\mathbb{I}\}$ we have 
\begin{equation}
R(-Q)=e^{i\frac{\pi}{8}\left(1+Q\right)}=R(Q)e^{i\frac{\pi}{4}Q}.\label{eq:R_Q_minusQ}
\end{equation}
Note, using Fact \ref{Clif_Fact}, that $e^{i\frac{\pi}{4}Q}=Ce^{i\frac{\pi}{4}Z_{(1)}}C^{\dagger}$
for some $C\in\mathcal{C}_{n}$, and 
\[
e^{i\frac{\pi}{4}Z_{(1)}}=e^{i\frac{\pi}{4}}\left(T_{(1)}\right)^{6}.
\]
Now using the fact that $T^{6}$ is Clifford we get $e^{-i\frac{\pi}{4}}e^{i\frac{\pi}{4}Q}\in\mathcal{C}_{n}$.

By repeatedly using (\ref{eq:R_Q_minusQ}) we transform (\ref{eq:with_generalbi})
into the form (\ref{eq:all_bis_zero}) (with a potentially different
set of Paulis $P_{i}$). Start with the decomposition (\ref{eq:with_generalbi})
and let $j\in[\mathcal{T}(U)]$ be the largest index such that $b_{j}=1$.
Then 
\begin{align}
U & =e^{i\phi}\left(\prod_{i=\mathcal{T}(U)}^{j+1}R\left(P_{i}\right)\right)R(-P_{j})\left(\prod_{i=j-1}^{1}R\left((-1)^{b_{i}}P_{i}\right)\right)D_{0}\nonumber \\
 & =e^{i\phi}\left(\prod_{i=\mathcal{T}(U)}^{j+1}R\left(P_{i}\right)\right)R(P_{j})e^{i\frac{\pi}{4}P_{j}}\left(\prod_{i=j-1}^{1}R\left((-1)^{b_{i}}P_{i}\right)\right)D_{0}\nonumber \\
 & =e^{i\left(\phi+\frac{\pi}{4}\right)}\left(\prod_{i=\mathcal{T}(U)}^{j}R\left(P_{i}\right)\right)\left[e^{-i\frac{\pi}{4}}e^{i\frac{\pi}{4}P_{j}}\left(\prod_{i=j-1}^{1}R\left((-1)^{b_{i}}P_{i}\right)\right)D_{0}\right].\label{eq:P_j_bi_zero}
\end{align}
Note that all the Paulis outside of the square brackets have $+$
signs in front of them. The unitary in square brackets has $T$-count
$j-1$ (since $e^{-i\frac{\pi}{4}}e^{i\frac{\pi}{4}P_{j}}\in\mathcal{C}_{n}$)
and can therefore be written as in equation (\ref{eq:with_generalbi})
with $j-1$ terms in the product. We now recurse, applying the above
steps to this unitary in square brackets. We repeat this procedure
(at most $j-1$ more times) until all the minus signs are replaced
with plus signs. \end{proof}
\begin{rem*}
The channel representation $\widehat{U}$ inherits the decomposition
from Proposition \ref{decomposition}, and in this representation
the global phase goes away, i.e., 
\begin{equation}
\widehat{U}=\left(\prod_{i=\mathcal{T}(U)}^{1}\widehat{R(P_{i})}\right)\widehat{C_{0}}.\label{eq:canon_channel_rep}
\end{equation}

\end{rem*}

\subsection*{Computing $T$-optimal circuits using an algorithm for COUNT-T}

Suppose that we have an algorithm $\mathcal{A}$ which solves the
decision problem COUNT-T. For any $U\in\mathcal{J}_{n}$, we show
that, with overhead polynomial in $N$ and $\mathcal{T}(U)$, such
an algorithm can also be used to generate a $T$-optimal circuit for
$U$ over the gate set $\{H,T,CNOT\}$. This is a simple application
of the decomposition (\ref{eq:all_bis_zero}). 

First note that $\mathcal{A}$ can be used to compute $\mathcal{T}(U)$,
by starting with $m=0$ and running $\mathcal{A}$ on each nonnegative
integer in sequence until we find $m-1$ and $m$ with $\mathcal{T}(U)\leq m$
and $\mathcal{T}(U)>m-1$. In this way we first compute $\mathcal{T}(U)$
and then we try each Pauli $Q\in\mathcal{P}_{n}\setminus\{\mathbb{I}\}$
until we find one which satisfies 
\[
\mathcal{T}\left(R(Q)^{\dagger}U\right)=\mathcal{T}(U)-1.
\]
 Note that (\ref{eq:all_bis_zero}) implies that at least one such
Pauli $Q$ exists. Repeating this procedure $\mathcal{T}(U)-1$ more
times we get Paulis $Q_{1},\ldots,Q_{\mathcal{T}(U)}$ such that 
\[
\mathcal{T}\left(R(Q_{1})^{\dagger}\ldots R(Q_{\mathcal{T}(U)-1})^{\dagger}R(Q_{\mathcal{T}(U)})^{\dagger}U\right)=0.
\]
In other words 
\begin{equation}
e^{i\phi}U=R(Q_{\mathcal{T}(U)})R(Q_{\mathcal{T}(U)-1})\ldots R(Q_{1})C_{0}\label{eq:U_R_Q}
\end{equation}
where $\phi\in[0,2\pi)$ and $C_{0}\in\mathcal{C}_{n}$. 

The next step is to compute a circuit which implements $C_{0}$ up
to a global phase. To this end, we compute the channel representation
$\widehat{C_{0}}$ explicitly as an $N^{2}\times N^{2}$ matrix using
the fact that 
\[
\widehat{C_{0}}=\widehat{R(Q_{1})^{\dagger}}\ldots\widehat{R(Q_{\mathcal{T}(U)-1})^{\dagger}}\widehat{R(Q_{\mathcal{T}(U)})^{\dagger}}\widehat{U}.
\]
We then use standard techniques to obtain a circuit which implements
$C_{0}$ (up to a global phase) over the gate set $\{H,T^{2},\text{CNOT}\}$.
For example, this can be done efficiently using the procedure outlined
in the proof of Theorem 8 from reference \cite{aaronson2004improved}.

Now using Fact \ref{Clif_Fact} there are Cliffords $C_{1},\ldots,C_{\mathcal{T}(U)}\in\mathcal{C}_{n}$
which map each of the Paulis $Q_{1},\ldots,Q_{\mathcal{T}(U)}$ to
$Z$ acting on the first qubit, i.e., 
\[
C_{i}Q_{i}C_{i}^{\dagger}=Z_{(1)}
\]
 for each $i\in[\mathcal{T}(U)]$, and we can efficiently compute
circuits for each of these Cliffords. Using the fact that $R(Z_{(1)})=T_{(1)}$
and plugging into equation (\ref{eq:U_R_Q}) gives 
\[
e^{i\phi}U=\left(\prod_{j=\mathcal{T}(U)}^{1}C_{j}^{\dagger}T_{(1)}C_{j}\right)C_{0}.
\]
The RHS, along with the circuits for $C_{0},\ldots,C_{\mathcal{T}(U)}$
discussed above, gives a $T$-optimal implementation of $U$ (up to
a global phase) over the gate set $\{H,T,\text{CNOT\}}$.

\section{$T$-count for single-qubit unitaries\label{sec:-count-for-single-qubit}}

In this Section we show how the $T$-count of a single-qubit unitary
$U\in\mathcal{J}_{1}$ can be directly computed from its channel representation
$\widehat{U}$.

Recall that $\widehat{U}$ has entries over the ring $\mathbb{Z}\left[\frac{1}{\sqrt{2}}\right]$.
For any nonzero element $\frac{a+b\sqrt{2}}{\sqrt{2}^{k}}$ of this
ring, the integer $k$ can be chosen to be minimal in the following
sense. If $a$ is even and $k>0$ then we can divide the top and bottom
by $\sqrt{2}$ and reduce $k$ by one. When $a$ is odd or $k=0$,
$k$ cannot be reduced any further, and we call it the \textit{smallest
denominator exponent. }Similar quantities were defined in references
\cite{kliuchnikov2012fast} and \cite{giles2013exact}.
\begin{defn}
\label{sde_defn}For any nonzero $v\in\mathbb{\mathbb{Z}}\left[\frac{1}{\sqrt{2}}\right]$
the smallest denominator exponent, denoted by $\text{sde}(v)$, is
the smallest $k\in\mathbb{N}$ for which 
\[
v=\frac{a+b\sqrt{2}}{\sqrt{2}^{k}}
\]
 with $a,b\in\mathbb{Z}$. We define $\mathrm{sde}(0)=0$. For a $d_{1}\times d_{2}$
matrix $M$ with entries over this ring we define
\[
\mathrm{sde}\left(M\right)=\max_{a\in[d_{1}],b\in[d_{2}]}\mathrm{sde}(M_{ab}).
\]

\end{defn}
We use the following fact which is straightforward to prove.
\begin{fact}
\label{Fact}Let $q,r\in\mathbb{\mathbb{Z}}\left[\frac{1}{\sqrt{2}}\right]$
with $\mathrm{sde}(q)>\mathrm{sde}(r)$. Then 
\[
\mathrm{sde}\left(\frac{1}{\sqrt{2}}\left(q\pm r\right)\right)=\mathrm{sde}(q)+1.
\]
 
\end{fact}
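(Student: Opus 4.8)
The plan is to write both $q$ and $r$ over the common denominator $\sqrt{2}^{k}$ where $k=\mathrm{sde}(q)$, and then read off the smallest denominator exponent of $\frac{1}{\sqrt2}(q\pm r)$ directly from the parity of the resulting integer numerator. First I would set $k=\mathrm{sde}(q)$ and write $q=\frac{a+b\sqrt2}{\sqrt2^{\,k}}$ with $a,b\in\mathbb{Z}$; by the definition of smallest denominator exponent, either $k=0$ or $a$ is odd. Since $\mathrm{sde}(r)=:\ell<k$, I can write $r=\frac{c+d\sqrt2}{\sqrt2^{\,\ell}}=\frac{(c+d\sqrt2)\sqrt2^{\,k-\ell}}{\sqrt2^{\,k}}$, and after expanding $(c+d\sqrt2)\sqrt2^{\,k-\ell}$ back into the form $a'+b'\sqrt2$ with $a',b'\in\mathbb{Z}$, the key observation is that $a'$ is \emph{even}: indeed $\sqrt2^{\,k-\ell}$ contributes at least one factor of $\sqrt2$ (as $k-\ell\ge1$), so every term in the integer part of $(c+d\sqrt2)\sqrt2^{\,k-\ell}$ carries a factor of $2$. (Concretely, if $k-\ell$ is even then $a'=2^{(k-\ell)/2}c$ and $b'=2^{(k-\ell)/2}d$; if $k-\ell$ is odd then $a'=2^{(k-\ell+1)/2}d$ and $b'=2^{(k-\ell-1)/2}c$; in both cases $a'$ is even since $(k-\ell)/2\ge1$ in the even case and $(k-\ell+1)/2\ge1$ in the odd case.)

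Next I would combine: $q\pm r=\frac{(a\pm a')+(b\pm b')\sqrt2}{\sqrt2^{\,k}}$, so
\[
\frac{1}{\sqrt2}(q\pm r)=\frac{(a\pm a')+(b\pm b')\sqrt2}{\sqrt2^{\,k+1}}.
\]
The numerator integer part is $a\pm a'$. Now I separate two cases. If $k\ge1$, then $a$ is odd (minimality of $k$ for $q$) and $a'$ is even, so $a\pm a'$ is odd; hence $k+1$ cannot be reduced and $\mathrm{sde}\!\left(\frac{1}{\sqrt2}(q\pm r)\right)=k+1=\mathrm{sde}(q)+1$. If $k=0$, then $\ell<0$ is impossible, so this case forces $\mathrm{sde}(r)<0$ which does not occur for nonzero $r$ — so in fact $k\ge1$ automatically whenever $\mathrm{sde}(q)>\mathrm{sde}(r)\ge0$, and the case $k=0$ never arises. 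I would also note $\frac{1}{\sqrt2}(q\pm r)\ne0$: since $a\pm a'$ is odd it is nonzero, so the element is genuinely nonzero and $\mathrm{sde}$ is the honest minimal exponent rather than the convention $\mathrm{sde}(0)=0$.

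The only mildly delicate step is the bookkeeping that rewrites $(c+d\sqrt2)\sqrt2^{\,k-\ell}$ in the canonical form $a'+b'\sqrt2$ and verifies $a'$ is even; this is the split-on-parity-of-$k-\ell$ computation above and is entirely routine. Everything else is immediate from the definitions, so I expect no real obstacle — the statement is essentially a one-line parity argument once the common denominator is chosen to be $\sqrt2^{\,\mathrm{sde}(q)}$.
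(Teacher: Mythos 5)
Your proof is correct. The paper states this fact without proof (calling it straightforward), and your argument --- putting $q$ and $r$ over the common denominator $\sqrt{2}^{\,\mathrm{sde}(q)}$ and observing that the integer part of the numerator of $\tfrac{1}{\sqrt{2}}\left(q\pm r\right)$ is odd, so the exponent $\mathrm{sde}(q)+1$ cannot be reduced --- is exactly the standard parity argument the authors allude to, including the necessary observation that $\mathrm{sde}(q)>\mathrm{sde}(r)\geq 0$ forces $\mathrm{sde}(q)\geq 1$ and hence the integer part of $q$'s numerator to be odd.
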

The $T$-count of a single qubit unitary is simply equal to the $\mathrm{sde}$
of its channel representation.
\begin{thm}
The $T$-count of a single-qubit unitary $U\in\mathcal{J}_{1}$ is
\[
\mathcal{T}(U)=\mathrm{sde}(\widehat{U}).
\]
\end{thm}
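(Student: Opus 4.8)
The plan is to prove the two inequalities $\mathcal{T}(U)\le\operatorname{sde}(\widehat U)$ and $\mathcal{T}(U)\ge\operatorname{sde}(\widehat U)$ separately, both by induction on $\operatorname{sde}(\widehat U)$. The base case $\operatorname{sde}(\widehat U)=0$ should say: $\widehat U$ has integer entries and is orthogonal, hence is a signed permutation matrix, so by the characterization in the ``Channel representation'' subsection $\widehat U\in\widehat{\mathcal C_1}$, i.e.\ $U$ is Clifford and $\mathcal{T}(U)=0$. (One should double-check the claimed fact that an orthogonal matrix with entries in $\mathbb Z[1/\sqrt2]$ having $\operatorname{sde}=0$ must be a signed permutation; for a $4\times4$ orthogonal integer matrix this is immediate since each column is an integer unit vector.)

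For the upper bound $\mathcal{T}(U)\le\operatorname{sde}(\widehat U)$: given $U$ with $\operatorname{sde}(\widehat U)=k\ge1$, I would exhibit a Pauli $Q\in\mathcal P_1\setminus\{\mathbb I\}$ such that $\operatorname{sde}\big(\widehat{R(Q)^{\dagger}U}\big)=\operatorname{sde}\big(\widehat{R(Q)}^{\dagger}\widehat U\big)\le k-1$; then by induction $\mathcal{T}(R(Q)^{\dagger}U)\le k-1$, and since $R(Q)$ costs one $T$ gate (it is Clifford-conjugate to $T_{(1)}$), $\mathcal{T}(U)\le k$. The key computation here is to understand $\widehat{R(Q)}$ explicitly: $R(Q)=\tfrac12(1+e^{i\pi/4})\mathbb I+\tfrac12(1-e^{i\pi/4})Q$, so conjugation by $R(Q)$ fixes $Q$ and $\mathbb I$ and acts on the two-dimensional space spanned by the remaining two Paulis $P,P'$ (with $QP=iP'$ up to sign) as a rotation by $\pi/4$, i.e.\ by the matrix $\tfrac{1}{\sqrt2}\left(\begin{smallmatrix}1&-1\\1&1\end{smallmatrix}\right)$ (signs depending on orientation). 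So $\widehat{R(Q)}^{\dagger}\widehat U$ differs from $\widehat U$ only by replacing the rows indexed by $P$ and $P'$ with $\tfrac{1}{\sqrt2}$ times their sum and difference. By Fact~\ref{Fact}, such an operation \emph{raises} $\operatorname{sde}$ by one on any column where exactly one of the two rows attains the max; so I must instead choose $Q$ so that in \emph{every} column the two rows $P,P'$ of $\widehat U$ either both have $\operatorname{sde}=k$ with a cancellation, or both have $\operatorname{sde}<k$. Proving that such a $Q$ exists is the crux: I would use the unitarity of $\widehat U$ (each column has norm $1$) together with a parity/number-theoretic argument about which entries in a given column can simultaneously have $\operatorname{sde}=k$ — this is exactly the kind of ``at most two entries in a column have maximal $\operatorname{sde}$, and they must be paired'' lemma familiar from \cite{kliuchnikov2012fast,giles2013exact}. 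This existence step is where I expect the real work to be.

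For the lower bound $\mathcal{T}(U)\ge\operatorname{sde}(\widehat U)$: use Proposition~\ref{decomposition} and its Remark to write $\widehat U=\widehat{R(P_{\mathcal T(U)})}\cdots\widehat{R(P_1)}\,\widehat{C_0}$. Since $\widehat{C_0}$ is a signed permutation it has $\operatorname{sde}=0$, and each left-multiplication by $\widehat{R(P_i)}$ — which, as above, only forms $\tfrac{1}{\sqrt2}$(sum) and $\tfrac{1}{\sqrt2}$(difference) of two rows — raises $\operatorname{sde}$ by at most one (indeed $\operatorname{sde}\big(\tfrac{1}{\sqrt2}(q\pm r)\big)\le\max(\operatorname{sde}(q),\operatorname{sde}(r))+1$ always, with equality only under the hypothesis of Fact~\ref{Fact}, and $\operatorname{sde}\big(\tfrac{1}{\sqrt2}(q\pm r)\big)\le\operatorname{sde}(q)$ when $\operatorname{sde}(q)=\operatorname{sde}(r)$ and there is cancellation, but in all cases it is $\le\max+1$). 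Hence $\operatorname{sde}(\widehat U)\le\mathcal T(U)$. This direction is routine once the action of $\widehat{R(P)}$ on rows is nailed down. Combining the two inequalities gives $\mathcal{T}(U)=\operatorname{sde}(\widehat U)$.

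One remark on bookkeeping: throughout I would fix, for a given nonidentity Pauli $Q\in\{X,Y,Z\}$, the induced involution on the index set $\{\mathbb I,X,Y,Z\}$ of rows/columns of $\widehat U$ (it swaps the two Paulis anticommuting with $Q$ and fixes $\mathbb I,Q$), and phrase both the row operation and the existence argument in terms of this combinatorial data; this keeps the $\pm$ signs from Cliffords-map-Paulis-to-Paulis-up-to-sign from cluttering the $\operatorname{sde}$ estimates, since $\operatorname{sde}$ is sign-insensitive.
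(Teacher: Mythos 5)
Your lower bound ($\mathrm{sde}(\widehat U)\le\mathcal T(U)$) is sound, and your base case works, with one caveat: $\mathrm{sde}=0$ entries have the form $a+b\sqrt2$, not merely $a\in\mathbb Z$, so "orthogonal integer matrix" is not quite the situation; you need the $\mathbb Q$-linear independence of $1$ and $\sqrt2$ applied to the column-norm equation $\sum_i(a_i^2+2b_i^2)+2\sqrt2\sum_i a_ib_i=1$ to force each column to be a signed standard basis vector. The genuine gap is in the upper bound: the entire content of $\mathcal T(U)\le\mathrm{sde}(\widehat U)$ is the existence, for every $U\in\mathcal J_1$ with $\mathrm{sde}(\widehat U)=k\ge1$, of a Pauli $Q$ with $\mathrm{sde}\bigl(\widehat{R(Q)}^{\dagger}\widehat U\bigr)\le k-1$, and you explicitly defer exactly this step ("this is where I expect the real work to be") to an unstated number-theoretic lemma about which entries of a column can simultaneously attain the maximal $\mathrm{sde}$. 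As written, nothing you have established rules out a unitary in $\widehat{\mathcal J_1}$ for which every choice of $Q$ preserves or raises the $\mathrm{sde}$; without that existence lemma the induction does not close, so the proposal is not yet a proof.

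The paper sidesteps the synthesis step entirely. Starting from the $T$-optimal decomposition $\widehat{UC_0^{\dagger}}=\prod_{i=\mathcal T(U)}^{1}\widehat{R(P_i)}$ furnished by Proposition \ref{decomposition}, optimality together with $\widehat{R(P)}^{2}=\mathbb I$ forces $P_i\ne P_{i+1}$, and one then proves by induction a sharper invariant: the lower-right $3\times3$ block of the partial product $M_k$ has exactly two rows of $\mathrm{sde}$ equal to $k$ and one row (the one labelled $P_k$) of $\mathrm{sde}$ equal to $k-1$. Fact \ref{Fact} applies at every step precisely because the two rows mixed by $\widehat{R(P_{k+1})}$ have different $\mathrm{sde}$, and the row left untouched is one of $\mathrm{sde}$ $k$ since $P_{k+1}\ne P_k$. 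This single computation yields $\mathrm{sde}(\widehat U)=\mathcal T(U)$ exactly, with no need to locate an $\mathrm{sde}$-decreasing Pauli. Note that completing your route amounts to proving essentially the same invariant (it identifies $P_k$ as the unique Pauli whose row has submaximal $\mathrm{sde}$, which is precisely the $Q$ your induction requires), so you may as well run the argument in the paper's direction.
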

\begin{proof}
By Proposition \ref{decomposition}, $U$ can be decomposed as (\ref{eq:all_bis_zero}),
as a global phase times a product of $R(P_{i})$ operators times $C_{0}$,
with $P_{i}\in\{X,Y,Z\}$ for all $i$ and $C_{0}\in\mathcal{C}_{1}$.
Note $\mathcal{T}\left(U\right)=\mathcal{T}(UC_{0}^{\dagger})$ and
\begin{equation}
\widehat{UC_{0}^{\dagger}}=\prod_{i=\mathcal{T}(U)}^{1}\widehat{R(P_{i})}.\label{eq:UC_channel}
\end{equation}
The operators which appear on the RHS are 
\begin{equation}
\widehat{R(X)}=\left(\begin{array}{cccc}
1 & 0 & 0 & 0\\
0 & 1 & 0 & 0\\
0 & 0 & \frac{1}{\sqrt{2}} & -\frac{1}{\sqrt{2}}\\
0 & 0 & \frac{1}{\sqrt{2}} & \frac{1}{\sqrt{2}}
\end{array}\right)\quad\widehat{R(Y)}=\left(\begin{array}{cccc}
1 & 0 & 0 & 0\\
0 & \frac{1}{\sqrt{2}} & 0 & \frac{1}{\sqrt{2}}\\
0 & 0 & 1 & 0\\
0 & -\frac{1}{\sqrt{2}} & 0 & \frac{1}{\sqrt{2}}
\end{array}\right)\quad\widehat{R(Z)}=\left(\begin{array}{cccc}
1 & 0 & 0 & 0\\
0 & \frac{1}{\sqrt{2}} & -\frac{1}{\sqrt{2}} & 0\\
0 & \frac{1}{\sqrt{2}} & \frac{1}{\sqrt{2}} & 0\\
0 & 0 & 0 & 1
\end{array}\right)\label{eq:three_matrices}
\end{equation}
 where the rows and columns are labeled by Paulis $\mathrm{\mathbb{I},X,Y,Z}$
from top to bottom and left to right. Note that 
\[
\widehat{R(X)}^{2}=\widehat{R(Y)}^{2}=\widehat{R(Z)}^{2}=\mathbb{I}.
\]
Since (\ref{eq:UC_channel}) uses the minimal number of $T$ gates,
we see that no two consecutive Paulis in this decomposition are equal,
i.e., $P_{i}\neq P_{i+1}$ for all $i\in[\mathcal{T}(U)-1]$. 

Consider the entries of (\ref{eq:UC_channel}). Since it is a product
of the matrices (\ref{eq:three_matrices}), we see that the top left
entry is always $1$ and all other entries in the first row and column
are $0$. We therefore focus on the lower-right $3\times3$ submatrix.
Looking at this submatrix for $\widehat{R(X)}$, $\widehat{R(Y)}$
and $\widehat{R(Z)}$ we see that there are two rows with $\mathrm{sde}$
equal to $1$ and one row where it is $0$ (recall from Definition
\ref{sde_defn} that the $\mathrm{sde}$ of a row vector is the maximum
$\mathrm{sde}$ of one of its entries). The rows with $\mathrm{sde}$
$0$ in the lower-right $3\times3$ submatrix of $\widehat{R(X)}$,
$\widehat{R(Y)}$ and $\widehat{R(Z)}$ are labeled by $X$, $Y$
and $Z$ respectively. Taking this as the base case, we prove by induction
that the lower right $3\times3$ submatrix of (\ref{eq:UC_channel})
contains two rows with $\mathrm{sde}$ $\mathcal{T}(U)$ and one row
with $\mathrm{sde}$ $\mathcal{T}(U)-1$, and this latter row is the
one labeled by the Pauli $P_{\mathcal{T}(U)}$. We suppose this holds
for $\mathcal{T}(U)=k$ and we show this implies it holds when $\mathcal{T}(U)=k+1$.
When $\mathcal{T}(U)=k+1$ we consider 
\begin{align}
M_{k+1}=\prod_{i=k+1}^{1}\widehat{R(P_{i})} & =\widehat{R(P_{k+1})}M_{k}\label{eq:k+1_case}
\end{align}
where $M_{k}=\prod_{i=k}^{1}\widehat{R(P_{i})}$. Consider the lower-right
$3\times3$ submatrix of $M_{k}$. Using the inductive hypothesis,
the row labeled $P_{k}$ has $\mathrm{sde}$ $k-1$, and the other
two rows have $\mathrm{sde}$ $k$. Since $P_{k+1}\neq P_{k}$, one
of these two rows is the one labeled $P_{k+1}$ and we write $Q=\{X,Y,Z\}\setminus\{P_{k},P_{k+1}\}$
for the other one. Note (looking at equation (\ref{eq:three_matrices}))
that left multiplying $M_{k}$ by $\widehat{R(P_{k+1})}$ leaves the
row $P_{k+1}$ alone but replaces rows $P_{k}$ and $Q$ with $\frac{1}{\sqrt{2}}$
times their sum and difference (in some order). The row $P_{k+1}$
of $M_{k+1}$ therefore has $\mathrm{sde}$ equal to $k$, as required.
Using the fact that row $P_{k}$ of $M_{k}$ has $\mathrm{sde}$ $k$
and row $Q$ of $M_{k}$ has $\mathrm{sde}$ $k-1,$ and using Fact
\ref{Fact}, we see that the two row vectors obtained by taking $\frac{1}{\sqrt{2}}$
times their sum and difference both have $\mathrm{sde}$ $k+1$. Hence
the two rows of $M_{k+1}$ labeled $P_{k}$ and $Q$ have $\mathrm{sde}$
$k+1$. This completes the proof.
\end{proof}

\section{Algorithm for the T-count\label{sec:Algorithm-for-the}}

In this Section we present an algorithm which solves COUNT-T using
$\mathcal{O}\left(N^{m}\text{poly}(m,N)\right)$ time and space.

A naive approach would be to exhaustively search over products of
the form (\ref{eq:clifford_T_expansion}) which alternate Clifford
group elements and T gates, stopping if we find an expression equal
to $e^{i\phi}U$. This does not work very well. Equation (\ref{eq:clifford_T_expansion})
has $m+1$ Clifford group elements and $m$ T-gates (each of which
may act on any out of the $n$ qubits), so the number of products
of this form is 
\[
|\mathcal{C}_{n}|^{m+1}n^{m}.
\]
A better approach is use Proposition \ref{decomposition} and search
over expressions of the form
\begin{equation}
\left(\prod_{i=m}^{1}R(P_{i})\right)C_{0}\label{eq:expression_prodP_C0}
\end{equation}
 until we find one which is equal to $e^{i\phi}U$ for some phase
$\phi$. In this case the size of the search space is $N^{2m}|\mathcal{C}_{n}|$
(recall $N=2^{n}$). However, a small modification improves this algorithm
substantially. Rather than searching for expressions of the form (\ref{eq:expression_prodP_C0})
until we find one that is equal to $e^{i\phi}U$, we can search over
expressions 
\[
U^{\dagger}\prod_{i=m}^{1}R(P_{i})
\]
until we find one that is a global phase times an element of the Clifford
group $\mathcal{C}_{n}$. This reduces the size of the search space
to $N^{2m}$. The goal of the rest of this Section is to describe
a more complicated algorithm which obtains (roughly) a square-root
improvement at the expense of increasing the memory used. 

We work in the channel representation, i.e., we consider the group
$\widehat{\mathcal{J}_{n}}$. To describe our algorithm it will be
convenient to have a notion of equivalence of unitaries \textit{up
to right-multiplication by a Clifford}. To be precise, we consider
the left cosets of $\widehat{\mathcal{C}_{n}}$ in $\widehat{\mathcal{J}_{n}}$.
We now show how to compute a (matrix-valued) function which tells
you whether or not two unitaries $W,V\in\widehat{\mathcal{J}_{n}}$
are from the same coset, i.e., whether or not $W=VC$ for some $C\in\widehat{\mathcal{C}_{n}}.$
\begin{defn}
[\textbf{Coset label}]\label{canonical form}

Let $W\in\widehat{\mathcal{J}_{n}}$ . We define $W^{(c)}$ to be
the matrix obtained from $W$ by the following procedure. First rewrite
$W$ so that each nonzero entry has a common denominator, equal to
$\sqrt{2}^{\mathrm{sde}(W)}$ (recall $\mathrm{sde}$ is defined for
matrices in Definition \ref{sde_defn}). For each column of $W$,
look at the first nonzero entry (from top to bottom) which we write
as $v=\frac{a+b\sqrt{2}}{\sqrt{2}^{\mathrm{sde}(W)}}$. If $a<0$,
or if $a=0$ and $b<0$, multiply every element of the column by $-1$.
Otherwise, if $a>0$ or $a=0$ and $b>0$, do nothing and move on
to the next column. After performing this step on all columns, permute
the columns so they are ordered lexicographically from left to right.
\end{defn}
The following Proposition shows that the function $W\rightarrow W^{(c)}$
labels cosets faithfully, so that two unitaries have the same label
if and only if they are from the same coset.
\begin{prop}
\label{coset_representative_lemma}Let $W,V\in\widehat{\mathcal{J}_{n}}$.
Then $W^{(c)}=V^{(c)}$ if and only if $W=VC$ for some $C\in\widehat{\mathcal{C}_{n}}$.\end{prop}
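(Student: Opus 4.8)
The plan is to deduce both implications from two structural facts recalled in Section~\ref{sec:Preliminaries}: that $\widehat{\mathcal{J}_{n}}$ is a group, and that a matrix $C\in\widehat{\mathcal{J}_{n}}$ lies in $\widehat{\mathcal{C}_{n}}$ if and only if it is a \emph{signed permutation matrix}, i.e.\ has exactly one nonzero entry in each row and column, each such entry being $\pm1$. The bridge to Definition~\ref{canonical form} is the elementary observation that right-multiplying a matrix by a signed permutation matrix permutes its columns and negates some of them, and that every such signed column permutation arises in this way.

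For the direction ``$W^{(c)}=V^{(c)}\implies W=VC$'', I would first observe that the two operations in Definition~\ref{canonical form} are each right-multiplications depending only on $W$: the columnwise sign change is right-multiplication by a diagonal $\pm1$ matrix $S_{W}$, and the lexicographic reordering is right-multiplication by a permutation matrix $\Pi_{W}$. Hence $W^{(c)}=WM_{W}$ with $M_{W}:=S_{W}\Pi_{W}$ a signed permutation matrix, and likewise $V^{(c)}=VM_{V}$. From $WM_{W}=VM_{V}$ one gets $W=V\,(M_{V}M_{W}^{-1})$; the matrix $C:=M_{V}M_{W}^{-1}=V^{-1}W$ is again a signed permutation matrix, and since $\widehat{\mathcal{J}_{n}}$ is a group it also lies in $\widehat{\mathcal{J}_{n}}$, so by the characterization above $C\in\widehat{\mathcal{C}_{n}}$, as wanted. (Here I use that $W$ is invertible, so it has no zero column and the procedure is well defined.)

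For the converse, the key reformulation is that the sign step of the procedure selects a canonical representative from each pair $\{+v,-v\}$ of columns: writing the leading nonzero entry of a column as $\tfrac{a+b\sqrt{2}}{\sqrt{2}^{\mathrm{sde}(W)}}$ with $a,b\in\mathbb{Z}$ not both zero, it is the representative with $a>0$, or with $a=0$ and $b>0$. Negating a column swaps the two representatives, so this choice is negation-invariant. Now suppose $W=VC$ with $C\in\widehat{\mathcal{C}_{n}}$; then $\mathrm{sde}(W)=\mathrm{sde}(V)$ since column negations and permutations preserve $\mathrm{sde}$, so both matrices get rewritten with the same denominator, and the columns of $W$ are, up to individual signs and a permutation, those of $V$. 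Applying the negation-invariant sign selection columnwise therefore yields the same multiset of columns for $W$ and for $V$, and the final lexicographic sort makes them literally equal, i.e.\ $W^{(c)}=V^{(c)}$.

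The step I expect to need the most care is confirming that the sign-normalization rule genuinely descends to a well-defined, negation-invariant choice on $\{+v,-v\}$: that the pair $(a,b)$ is uniquely determined by the column and the fixed denominator $\sqrt{2}^{\mathrm{sde}}$ (using $\mathbb{Q}$-linear independence of $1$ and $\sqrt{2}$), together with a short case check on the signs of $a$ and $b$ that negating then normalizing equals normalizing directly. Everything else — invariance of $\mathrm{sde}$ under column operations, the group property of $\widehat{\mathcal{J}_{n}}$, and the signed-permutation characterization of $\widehat{\mathcal{C}_{n}}$ — is already available from Section~\ref{sec:Preliminaries}.
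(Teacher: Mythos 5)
Your proof is correct and follows essentially the same route as the paper's: both directions hinge on writing the coset-label procedure as right-multiplication by a signed permutation matrix, and on the characterization of $\widehat{\mathcal{C}_{n}}$ as exactly the signed permutation matrices inside the group $\widehat{\mathcal{J}_{n}}$. The only difference is that you spell out the negation-invariance of the sign-normalization step, which the paper leaves implicit.
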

\begin{proof}
First suppose $W=VC$ with $C\in\widehat{\mathcal{C}_{n}}$. Since
$C\in\widehat{\mathcal{C}_{n}}$ it has exactly one nonzero entry
in every row and every column, equal to either $1$ or $-1$. In other
words $W$ is obtained from $V$ by permuting columns and multiplying
some of them by $-1$. Given this fact, we see from Definition \ref{canonical form}
that $W^{(c)}=V^{(c)}$.

Now suppose $W^{(c)}=V^{(c)}$. From Definition \ref{canonical form}
we see that $W^{(c)}=WD_{1}\pi_{1}$, where $D_{1}$ is diagonal and
has diagonal entries all $\pm1$, and $\pi_{1}$ is a permutation
matrix. Likewise we can write $V^{(c)}=VD_{2}\pi_{2}$. Now setting
these expressions equal we get 
\begin{equation}
W=VC,\quad\text{where}\quad C=D_{2}\pi_{2}\pi_{1}^{-1}D_{1}^{-1}.\label{eq:UVA}
\end{equation}
Note that $C=V^{\dagger}W\in\widehat{\mathcal{J}_{n}}$. Since $C=D_{2}\pi_{2}\pi_{1}^{-1}D_{1}^{-1}$,
we see that it has a single nonzero entry in each row and in each
column, equal to either $+1$ or $-1$. Putting these two facts together
we get $C\in\widehat{\mathcal{C}_{n}}$ (since any unitary in $\widehat{\mathcal{J}_{n}}$
with nonzero entries all $\pm1$ is in $\widehat{\mathcal{C}_{n}}$).
\end{proof}
Our algorithm uses a sorted coset database, defined as follows.
\begin{defn}
[\textbf{Sorted coset database} $\mathcal{D}_{k}^{n}$] \label{sorted_coset_db}For
any $k\in\mathbb{N}$, a sorted coset database $\mathcal{D}_{k}^{n}$
is a list of unitaries $W\in\widehat{\mathcal{J}_{n}}$ with the following
three properties:

\begin{enumerate}[(a)]

\item\textit{Every unitary in the database has $T$-count $k$}\textbf{.}
Every $W\in\mathcal{D}_{k}^{n}$ satisfies $\mathcal{T}(W)=k$.

\medskip{}

\item\textit{For any unitary with $T$-count $k$, there is a unique
unitary in the database with the same coset label.}\textbf{ }For any
$V\in\widehat{\mathcal{J}_{n}}$ with $\mathcal{T}(V)=k$, there exists
a unique $W\in\mathcal{D}_{k}^{n}$ such that $W^{(c)}=V^{(c)}$. 

\medskip{}

\item\textit{The database is sorted according to the coset labels.}
If $W,V\in\mathcal{D}_{k}^{n}$ and $W^{(c)}<V^{(c)}$ (using lexicographic
ordering on the matrices) then $W$ appears before $V$. 

\end{enumerate}

A sorted coset database can be viewed mathematically as a list of
unitaries, but in practice when we implement such a database on a
computer it makes sense to store each unitary along with its coset
label as a pair $\left(W,W^{(c)}\right)$. This ensures that coset
labels do not need to be computed on the fly during the course of
our algorithm.
\end{defn}

\subsection{Algorithm }

We are given as input a unitary $U\in\mathcal{J}_{n}$ and a nonnegative
integer $m$. The algorithm determines if $\mathcal{T}(U)\leq m$,
and if it is, also computes $\mathcal{T}(U)$. 
\begin{enumerate}
\item \textbf{Precompute sorted coset databases $\mathcal{D}_{0}^{n},\mathcal{D}_{1}^{n},\ldots,D_{\left\lceil \frac{m}{2}\right\rceil }^{n}$}.
To generate these databases, we start with $\mathcal{D}_{0}^{n}$
which contains only the $N^{2}\times N^{2}$ identity matrix. We then
construct $\mathcal{D}_{1}^{n}$, then $\mathcal{D}_{2}^{n}$, up
to $D_{\left\lceil \frac{m}{2}\right\rceil }^{n}$ as follows. To
construct $\mathcal{D}_{k}^{n}$ we consider all unitaries of the
form 
\begin{equation}
W=R(P)M,\label{eq:W_hat_loop}
\end{equation}
 where $M\in\mathcal{D}_{k-1}^{n}$ and $P\in\mathcal{P}_{n}\setminus\{\mathbb{I}\}$,
one at a time. We insert $W$ into $\mathcal{D}_{k}^{n}$ (maintaining
the ordering according to the coset labels) if and only if its coset
label is new. That is to say, if and only if no unitary $V$ in one
of the database $\mathcal{D}_{0}^{n},\mathcal{D}_{1}^{n},\ldots,D_{k-1}^{n}$
or in the partially constructed database $\mathcal{D}_{k}^{n}$ satisfies
$W^{(c)}=V^{(c)}$.
\item \textbf{\textcolor{black}{Check if $\mathcal{T}(U)\leq\left\lceil \frac{m}{2}\right\rceil $.}}
Use binary search to check if there exists $W\in\mathcal{D}_{j}^{n}$
for some $j\in\{0,1,...,\left\lceil \frac{m}{2}\right\rceil \}$,
such that $\widehat{U}^{(c)}=W^{(c)}$. If so, output $\mathcal{T}(U)=j$
and stop. If not, proceed to step 3.
\item \textbf{Meet-in-the-middle search.} \textcolor{black}{Start }with
$r=\left\lceil \frac{m}{2}\right\rceil +1$ and do the following.
For each $W\in\mathcal{D}_{r-\left\lceil \frac{m}{2}\right\rceil }^{n}$,
use binary search to find $V\in\mathcal{D}_{\left\lceil \frac{m}{2}\right\rceil }^{n}$
satisfying $\left(W^{\dagger}\widehat{U}\right)^{(c)}=V^{(c)}$, if
it exists. If we find $W$ and $V$ satisfying this condition, we
conclude $\mathcal{T}(U)=r$ and stop. If no such pair of unitaries
is found, and if $r\leq m-1$, we increase $r$ by one and repeat
this step; if $r=m$ we conclude $\mathcal{T}(U)>m$ and stop.
\end{enumerate}
Let us now consider the time and memory resources required by this
algorithm. First consider step $1$. To compute the sorted coset databases
we loop over all unitaries of the form \ref{eq:W_hat_loop}, with
$k\in\{0,\ldots,\left\lceil \frac{m}{2}\right\rceil \}$. There are
$\mathcal{O}\left(N^{2\left\lceil \frac{m}{2}\right\rceil }\right)$
such unitaries since $|\mathcal{P}_{n}|=N^{2}.$ For each unitary
we need to compute the coset label and search to find it in databases
generated so far; since the databases are sorted, the search can be
done using $\mathcal{O}(\text{log}(N^{2\left\lceil \frac{m}{2}\right\rceil }))$
comparisons. The objects we are comparing (the unitaries and their
coset labels) are themselves of size $N^{2}\times N^{2}$, so step
1. takes time $\mathcal{O}(N^{m}\text{poly}(m,N))$. To store the
databases likewise requires space $\mathcal{O}(N^{m}\text{poly}(m,N))$.
Step 2. takes time $\mathcal{O}(\text{poly}(m,N))$ since the binary
search can be performed quickly. In step 3. we loop over all elements
of the databases until we reach the stopping condition; the total
time required in this step is $\mathcal{O}(N^{m}\text{poly}(m,N))$.

\subsection{Correctness of the algorithm}

To prove that the output of our algorithm is correct, we first show
that step 1. of the algorithm correctly generates sorted coset databases
$\mathcal{D}_{0}^{n},\mathcal{D}_{1}^{n},\ldots,D_{\left\lceil \frac{m}{2}\right\rceil }^{n}$.
Given this fact, it is clear that if $0\leq\mathcal{T}(U)\leq\left\lceil \frac{m}{2}\right\rceil $
then step 2. of the algorithm will correctly compute $\mathcal{T}(U)$.
In the following we show that if $\left\lceil \frac{m}{2}\right\rceil <\mathcal{T}(U)\leq m$
then step 3. computes it (and otherwise outputs $\mathcal{T}(U)>m$).

First consider step 1. It is not hard to see that $\mathcal{D}_{0}^{n}$,
which contains only the identity matrix, is a sorted coset database
(since all unitaries with $T$-count $0$ are Cliffords, and every
Clifford has the same coset label as the identity matrix). 

We now use induction; we assume that $\mathcal{D}_{0}^{n},\mathcal{D}_{1}^{n},\ldots,D_{k-1}^{n}$
are sorted coset databases and show this implies that $\mathcal{D}_{k}^{n}$,
as generated by the algorithm, is a sorted coset database. We confirm
properties (a), (b), and (c) from Definition \ref{sorted_coset_db}.
First suppose $W$ is added to the list $\mathcal{D}_{k}^{n}$ by
the algorithm. This implies it has the form 
\begin{equation}
W=\prod_{i=k}^{1}\widehat{R(P_{i})}\label{eq:W_decomp}
\end{equation}
 for some $\{P_{i}\}\in\mathcal{P}_{n}\setminus\{\mathbb{I}\}$, and
that its coset label is not equal to that of some other unitary $V\in\mathcal{D}_{j}^{n}$
with $j<k$. From (\ref{eq:W_decomp}) we see that $\mathcal{T}(W)\leq k$.
Using the inductive hypothesis we see that $\mathcal{T}(W)>k-1$,
since otherwise there would exist $V\in\mathcal{D}_{\mathcal{T}(W)}^{n}$
with $W^{(c)}=V^{(c)}$. Hence $\mathcal{T}(W)=k$ and $\mathcal{D}_{k}^{n}$
satisfies property (a). Now suppose that $Q\in\widehat{\mathcal{J}_{n}}$
has $\mathcal{T}(Q)=k$. Using the fact that $\mathcal{D}_{k-1}^{n}$
is a sorted coset database, there exists $M\in\mathcal{D}_{k-1}^{n}$
and $P\in\mathcal{P}_{n}\setminus\{\mathbb{I}\}$ such that $Q^{(c)}=\left(R(P)M\right)^{(c)}$.
Now looking at the method by which $\mathcal{D}_{k}^{n}$ is generated
we see that there exists a unique $W\in\mathcal{D}_{k}^{n}$ such
that $W^{(c)}=\left(R(P)M\right)^{(c)}$. We have thus confirmed property
(b): for any $Q\in\widehat{\mathcal{J}_{n}}$ with $\mathcal{T}(Q)=k$
there exists a unique $W\in\mathcal{D}_{k}^{n}$ such that $W^{(c)}=Q^{(c)}$.
The ordering of the database $\mathcal{D}_{k}^{n}$ (property (c))
is maintained throughout the algorithm. Hence $\mathcal{D}_{k}^{n}$
is a sorted coset database.

The following Theorem shows that the output computed in step 3. of
the algorithm is correct.
\begin{thm}
Let $U\in\mathcal{J}_{n}$ and $m\in\mathbb{N}$ with $\left\lceil \frac{m}{2}\right\rceil <\mathcal{T}(U)\leq m$,
and let $\mathcal{D}_{0}^{n},\ldots,\mathcal{D}_{\left\lceil \frac{m}{2}\right\rceil }^{n}$
be sorted coset databases. Then $r=\mathcal{T}(U)$ is the smallest
integer in $\{\left\lceil \frac{m}{2}\right\rceil +1,\left\lceil \frac{m}{2}\right\rceil +2,\ldots,m\}$
for which 
\begin{equation}
\left(W^{\dagger}\widehat{U}\right)^{(c)}=V^{(c)}\label{eq:UWVC}
\end{equation}
with $W\in\mathcal{D}_{r-\left\lceil \frac{m}{2}\right\rceil }^{n}$
and $V\in\mathcal{D}_{\left\lceil \frac{m}{2}\right\rceil }^{n}$.\end{thm}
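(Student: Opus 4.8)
The plan is to prove the two halves of the claim separately: (i) that $r=\mathcal{T}(U)$ itself satisfies condition (\ref{eq:UWVC}) for a suitable pair $W,V$, and (ii) that no $r'$ in the range $\{\lceil m/2\rceil+1,\dots,m\}$ strictly smaller than $\mathcal{T}(U)$ can satisfy it. Together these say precisely that $\mathcal{T}(U)$ is the smallest integer in the range for which (\ref{eq:UWVC}) holds. Throughout I would work in the channel representation and use only ingredients already available: the homomorphism property $\widehat{AB}=\widehat{A}\widehat{B}$; the fact that $\mathcal{T}$ is subadditive under products and is unchanged by multiplication on either side by an element of $\widehat{\mathcal{C}_n}$; Proposition \ref{coset_representative_lemma}, which says $X^{(c)}=Y^{(c)}$ iff $X=YC$ for some $C\in\widehat{\mathcal{C}_n}$; and properties (a), (b) of the sorted coset databases (Definition \ref{sorted_coset_db}).

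For (ii), suppose $r'$ is in the range and (\ref{eq:UWVC}) holds with $W\in\mathcal{D}_{r'-\lceil m/2\rceil}^n$ and $V\in\mathcal{D}_{\lceil m/2\rceil}^n$. Property (a) gives $\mathcal{T}(W)=r'-\lceil m/2\rceil$ and $\mathcal{T}(V)=\lceil m/2\rceil$. By Proposition \ref{coset_representative_lemma}, the equality $\left(W^\dagger\widehat{U}\right)^{(c)}=V^{(c)}$ means $W^\dagger\widehat{U}=VC$ for some $C\in\widehat{\mathcal{C}_n}$, hence $\widehat{U}=WVC$; since right-multiplication by a Clifford leaves the $T$-count unchanged,
\[
\mathcal{T}(U)=\mathcal{T}(WV)\le\mathcal{T}(W)+\mathcal{T}(V)=\left(r'-\lceil m/2\rceil\right)+\lceil m/2\rceil=r'.
\]
Thus every $r'$ in the range for which (\ref{eq:UWVC}) holds satisfies $r'\ge\mathcal{T}(U)$, which is exactly (ii).

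For (i), I would start from the channel-representation decomposition (\ref{eq:canon_channel_rep}) with $r=\mathcal{T}(U)$, namely $\widehat{U}=\left(\prod_{i=r}^{1}\widehat{R(P_i)}\right)\widehat{C_0}$, and split the product at position $\lceil m/2\rceil$: put $W_0=\prod_{i=r}^{\lceil m/2\rceil+1}\widehat{R(P_i)}$ and $V_0=\left(\prod_{i=\lceil m/2\rceil}^{1}\widehat{R(P_i)}\right)\widehat{C_0}$. Both are in $\widehat{\mathcal{J}_n}$, $\widehat{U}=W_0V_0$, and counting factors gives $\mathcal{T}(W_0)\le r-\lceil m/2\rceil$ and $\mathcal{T}(V_0)\le\lceil m/2\rceil$. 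Since $r=\mathcal{T}(U)=\mathcal{T}(W_0V_0)\le\mathcal{T}(W_0)+\mathcal{T}(V_0)\le r$, both inequalities are equalities, so $\mathcal{T}(W_0)=r-\lceil m/2\rceil$ and $\mathcal{T}(V_0)=\lceil m/2\rceil$ \emph{exactly}. Now property (b) applied to $\mathcal{D}_{r-\lceil m/2\rceil}^n$ gives $W\in\mathcal{D}_{r-\lceil m/2\rceil}^n$ with $W^{(c)}=W_0^{(c)}$; by Proposition \ref{coset_representative_lemma} this means $W_0=WC'$ for some $C'\in\widehat{\mathcal{C}_n}$, whence $W^\dagger\widehat{U}=W^\dagger W_0V_0=C'V_0$, a unitary of $T$-count $\mathcal{T}(C'V_0)=\mathcal{T}(V_0)=\lceil m/2\rceil$. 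Finally property (b) applied to $\mathcal{D}_{\lceil m/2\rceil}^n$ produces $V\in\mathcal{D}_{\lceil m/2\rceil}^n$ with $V^{(c)}=\left(C'V_0\right)^{(c)}=\left(W^\dagger\widehat{U}\right)^{(c)}$, which is precisely (\ref{eq:UWVC}) with $r=\mathcal{T}(U)$.

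I do not anticipate a genuine obstacle; the argument is essentially bookkeeping with the database axioms and the (sub)additivity of $\mathcal{T}$. The one point that must not be glossed over is the squeeze in part (i) showing that the two halves $W_0,V_0$ have $T$-count \emph{exactly} $r-\lceil m/2\rceil$ and $\lceil m/2\rceil$, not merely at most these values: property (b) only guarantees a representative for unitaries whose $T$-count equals the database level, so this exactness is what makes the lookups of $W_0$ in $\mathcal{D}_{r-\lceil m/2\rceil}^n$ and of $C'V_0$ in $\mathcal{D}_{\lceil m/2\rceil}^n$ legitimate.
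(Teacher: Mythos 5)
Your proof is correct and follows essentially the same route as the paper: the lower bound via Proposition \ref{coset_representative_lemma}, subadditivity of $\mathcal{T}$, and database property (a); the achievability at $r=\mathcal{T}(U)$ by splitting the decomposition (\ref{eq:canon_channel_rep}) at position $\left\lceil m/2\right\rceil$ and doing two lookups via property (b). Your squeeze argument establishing that $\mathcal{T}(W_0)$ and $\mathcal{T}(V_0)$ equal $r-\left\lceil m/2\right\rceil$ and $\left\lceil m/2\right\rceil$ \emph{exactly} is a welcome justification of a step the paper merely asserts.
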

\begin{proof}
Using Proposition \ref{coset_representative_lemma} we see that (\ref{eq:UWVC})
implies 
\[
\widehat{U}=WVC
\]
 for some $C\in\widehat{\mathcal{C}_{n}}$. Hence, whenever \ref{eq:UWVC}
holds we have 
\[
\mathcal{T}\left(U\right)\leq\mathcal{T}(W)+\mathcal{T}(V)=r-\left\lceil \frac{m}{2}\right\rceil +\left\lceil \frac{m}{2}\right\rceil =r
\]
 To complete the proof, we show that \ref{eq:UWVC} holds with $r=\mathcal{T}(U)$.
From Proposition \ref{decomposition} we can write 
\begin{equation}
\widehat{U}=W_{0}V_{0}C_{0}\label{eq:U-0}
\end{equation}
 where $C_{0}\in\widehat{\mathcal{C}_{n}}$ and 
\[
W_{0}=\prod_{i=\mathcal{T}(U)}^{\left\lceil \frac{m}{2}\right\rceil +1}\widehat{R(P_{i})}\qquad V_{0}=\prod_{i=\left\lceil \frac{m}{2}\right\rceil }^{1}\widehat{R(P_{i})}
\]
for some Paulis $P_{i}\in\mathcal{P}_{n}\setminus\{\mathbb{I}\}$.
Note that $\mathcal{T}(W_{0})=\mathcal{T}(U)-\left\lceil \frac{m}{2}\right\rceil $
and $\mathcal{T}(V_{0})=\left\lceil \frac{m}{2}\right\rceil $.

Using the fact that $\mathcal{T}(W_{0})=\mathcal{T}(U)-\left\lceil \frac{m}{2}\right\rceil $
and property (b) from Definition \ref{sorted_coset_db}, there exists
$W\in\mathcal{D}_{\mathcal{T}(U)-\left\lceil \frac{m}{2}\right\rceil }^{n}$
satisfying $W^{(c)}=W_{0}^{(c)}$ , which implies 
\begin{align*}
WC_{1} & =W_{0}
\end{align*}
 for some $C_{1}\in\widehat{\mathcal{C}_{n}}$ by Proposition \ref{coset_representative_lemma}.
Hence 
\[
\widehat{U}=WC_{1}V_{0}C_{0}.
\]
 Now $\mathcal{T}\left(C_{1}V_{0}C_{0}\right)=\mathcal{T}\left(V_{0}\right)=\left\lceil \frac{m}{2}\right\rceil $
so (using the same reasoning as above) there exists $V\in\mathcal{D}_{\left\lceil \frac{m}{2}\right\rceil }^{n}$
satisfying 
\[
C_{1}V_{0}C_{0}=VC_{2}
\]
for some $C_{2}\in\widehat{\mathcal{C}_{n}}$. Hence 
\[
\widehat{U}=WVC
\]
where $C=C_{2}C_{0}$, or equivalently $W^{\dagger}\widehat{U}=VC$.
Applying Proposition \ref{coset_representative_lemma} gives $\left(W^{\dagger}\widehat{U}\right)^{(c)}=V^{(c)}.$
\end{proof}

\subsection{The T-count of Toffoli and Fredkin is $7$}

We implemented our algorithm in C++. For two qubits we were are able
to generate coset databases $\mathcal{D}_{0}^{2},...\mathcal{D}_{6}^{2}$
(which used $3.96$ GB%
\footnote{1 GB=$10^{9}$ bytes%
} of space in total), and for three qubits we generated $\mathcal{D}_{0}^{3},...\mathcal{D}_{3}^{3}$
(size in memory $4.60$ GB); this allows us to run the two-qubit algorithm
with $m=12$ or the three-qubit algorithm with $m=6$ . We ran the
three-qubit algorithm on Toffoli and Fredkin gates with $m=6$, and
we found that the $T$-count of both of these unitaries is $\geq7$.
It was already known that both of these gates can be implemented using
circuits with $7$ T gates \cite{amy2013meet}. Together with our
results, this shows that the $T$-count of Toffoli and Fredkin are
both $7$. In Figures (\ref{fig:Toffoli-circuit}) and (\ref{fig:Fredkin-circuit})
we reproduce the circuits for these gates from reference \cite{amy2013meet},
which are now known to be $T$-optimal. We emphasize that our definition
of $T$-count does not permit ancilla qubits; it may be possible to
do better using ancillae and measurement along with classically controlled
operations (e.g., for the Toffoli gate \cite{Jones2013}). 

\begin{figure}
\includegraphics[scale=0.5]{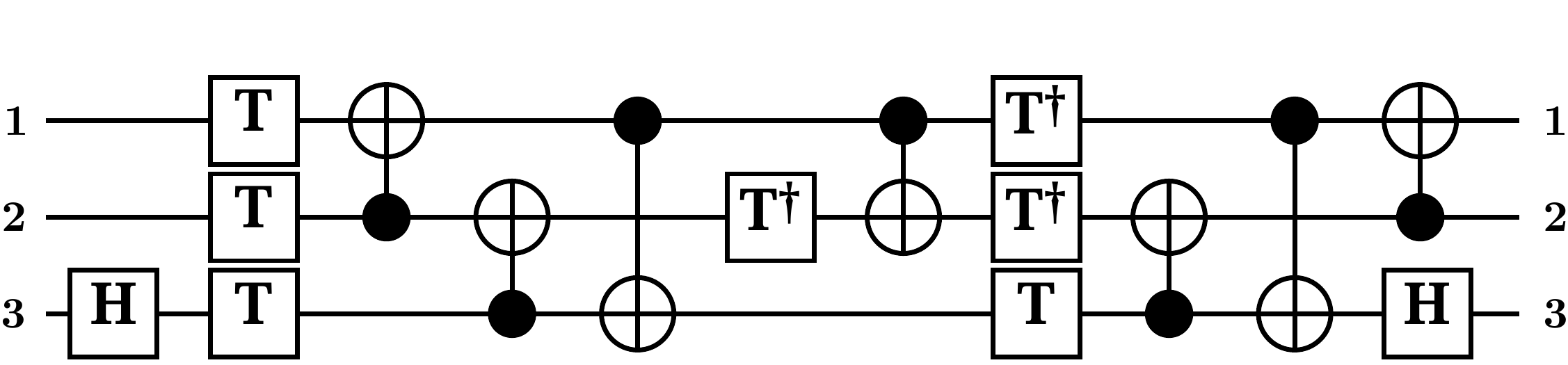}

\caption{Toffoli circuit with 7 $T$ gates \cite{amy2013meet}. Our computer
search shows that this circuit is $T$-optimal: Toffoli cannot be
implemented using 3 qubits with less than $7$ T gates.\label{fig:Toffoli-circuit}}
\end{figure}
 
\begin{figure}
\includegraphics[scale=0.5]{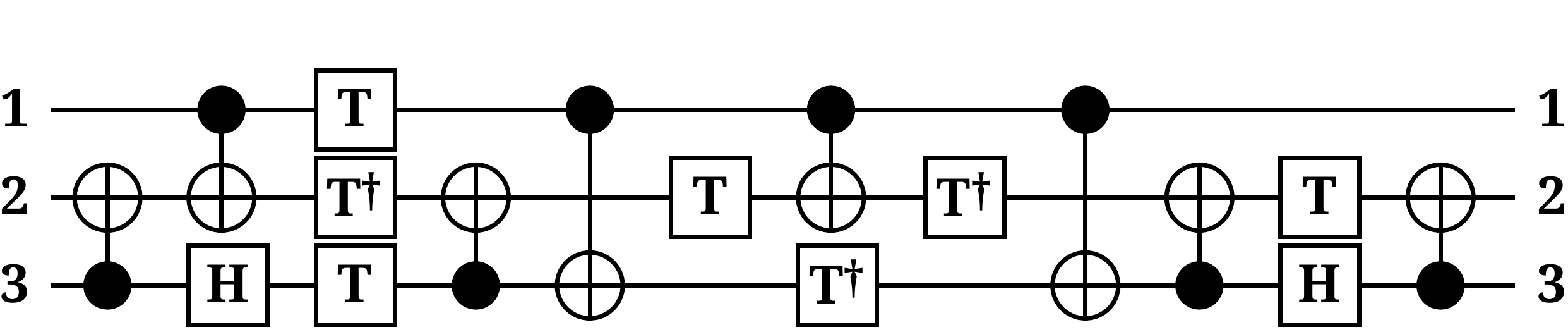}

\caption{Fredkin circuit with $7$ $T$ gates \cite{amy2013meet}. Our computer
search shows that this circuit is $T$-optimal.\label{fig:Fredkin-circuit} }
\end{figure}

\section{Conclusions and open problems\label{sec:Conclusions-and-open} }

Our algorithm for COUNT-T can be viewed as a method of performing
exhaustive search. In contrast, in the single-qubit case we can characterize
the $T$-count as a simple property of the given unitary: the $\mathrm{sde}$
of its channel representation. This characterization does not appear
to generalize to $n>1$ qubits; for example, the $\mathrm{sde}$ of
the channel representation of the Toffoli gate is $2$ but its $T$-count
is 7. 

We conclude by stating the most obvious question, which remains open:
does there exist a polynomial time (as a function of $N$ and $m$)
algorithm for COUNT-T? Alternatively, is this problem computationally
difficult? We do not know the answer to this question even for the
special case of two-qubit unitaries.

\section{Acknowledgments}

\textcolor{black}{We thank Matthew Amy for helpful discussions. }David
Gosset is supported by NSERC.\textcolor{black}{{} }Michele Mosca is
supported by Canada's NSERC, MITACS, CIFAR, and CFI. IQC and Perimeter
Institute are supported in part by the Government of Canada and the
Province of Ontario.

\appendix

\section{Proof of Fact \ref{Clif_Fact}}
\begin{proof}
We first show that it is sufficient to prove the result with $P=Z_{(1)}$.
To see this, consider $P_{A},P_{B}\in\mathcal{P}_{n}\setminus\{\mathbb{I}\}$
and suppose we can efficiently compute circuits for Cliffords $C_{A},C_{B}\in\mathcal{C}_{n}$
which satisfy $C_{A}Z_{(1)}C_{A}^{\dagger}=P_{A}$ and $C_{B}Z_{(1)}C_{B}^{\dagger}=P_{B}$.
Then $C_{B}C_{A}^{\dagger}P_{A}C_{A}C_{B}^{\dagger}=P_{B}$, and we
can efficiently compute the circuit for $C_{B}C_{A}^{\dagger}$.

By conjugating with $\mathcal{O}(n)$ SWAP gates and CNOT gates (both
are Clifford) we can map $P=Z_{(1)}$ into any operator of the form
\begin{equation}
\bigotimes_{i=1}^{n}Z_{(i)}^{y_{i}}\label{eq:prod_zs}
\end{equation}
with $\vec{y}\neq\vec{0}$ an $n$-bit string. This can be seen using
the fact that 
\[
\text{CNOT}\left(\mathbb{I}\otimes Z\right)\text{CNOT}=Z\otimes Z
\]
 and 
\[
\text{SWAP}\left(\mathbb{I}\otimes Z\right)\text{SWAP}=Z\otimes\mathbb{I}.
\]
 Finally, note that the single-qubit Pauli $Z$ matrix can be mapped
to either $X$ or $Y$ by single-qubit Cliffords $H\in\mathcal{C}_{1}$
and $T^{2}H\in\mathcal{C}_{1}$ 
\[
HZH=X\qquad\left(T^{2}H\right)Z\left(T^{2}H\right)^{\dagger}=Y.
\]
 Using these facts it is not hard to see that $P=Z_{(1)}$ can be
transformed into $P^{\prime}\in\mathcal{P}_{n}\setminus\{\mathbb{I}\}$
by first mapping to an operator of the form \ref{eq:prod_zs} by conjugating
with $\mathcal{O}(n)$ CNOTs and SWAPs, and then conjugating by a
tensor product of (at most $n$) single-qubit Cliffords.
\end{proof}
\bibliographystyle{plain}
\bibliography{Tcount_references}

\begin{thebibliography}{10}

\bibitem{aaronson2004improved}
Scott Aaronson and Daniel Gottesman.
\newblock Improved simulation of stabilizer circuits.
\newblock {\em Physical Review A}, 70(5):052328, 2004.

\bibitem{matroid}
M.~{Amy}, D.~{Maslov}, and M.~{Mosca}.
\newblock {Polynomial-time T-depth Optimization of Clifford+T circuits via
  Matroid Partitioning}.
\newblock {\em e-print arXiv: 1303.2042}, March 2013.

\bibitem{amy2013meet}
Matthew Amy, Dmitri Maslov, Michele Mosca, and Martin Roetteler.
\newblock A meet-in-the-middle algorithm for fast synthesis of depth-optimal
  quantum circuits.
\newblock {\em Computer-Aided Design of Integrated Circuits and Systems, IEEE
  Transactions on}, 32:818--830, 2013.

\bibitem{Calderbank}
A.~R. {Calderbank}, E.~M {Rains}, P.~W. {Shor}, and N.~J.~A. {Sloane}.
\newblock {Quantum Error Correction via Codes over GF(4)}.
\newblock {\em eprint arXiv:quant-ph/9608006}, August 1996.

\bibitem{giles2013exact}
Brett Giles and Peter Selinger.
\newblock Exact synthesis of multiqubit clifford+ {T} circuits.
\newblock {\em Physical Review A}, 87(3):032332, 2013.

\bibitem{Gottesman_Heisenberg}
D.~{Gottesman}.
\newblock {The Heisenberg Representation of Quantum Computers}.
\newblock {\em eprint arXiv:quant-ph/9807006}, July 1998.

\bibitem{Jones2013}
C.~{Jones}.
\newblock {Low-overhead constructions for the fault-tolerant Toffoli gate}.
\newblock {\em Physical Review A}, 87(2):022328, February 2013.

\bibitem{kliuchnikov2013synthesis}
Vadym Kliuchnikov.
\newblock Synthesis of unitaries with clifford+{T} circuits.
\newblock {\em eprint arXiv: 1306.3200}, 2013.

\bibitem{kliuchnikov2012fast}
Vadym Kliuchnikov, Dmitri Maslov, and Michele Mosca.
\newblock Fast and efficient exact synthesis of single qubit unitaries
  generated by clifford and {T} gates.
\newblock {\em eprint arXiv: 1206.5236}, 2013.

\bibitem{Veitch2013}
V.~{Veitch}, S.~A. {Hamed Mousavian}, D.~{Gottesman}, and J.~{Emerson}.
\newblock {The Resource Theory of Stabilizer Computation}.
\newblock {\em e-print arXiv: 1307.7171}, July 2013.

\end{thebibliography}

\end{document}